\newtheorem{theorem}{Theorem}[section]
\newtheorem{lemma}[theorem]{Lemma}
\newtheorem{proposition}[theorem]{Proposition}
\newcommand{\R}{{\mathord{\mathbb R}}}
\newcommand{\Z}{{\mathord{\mathbb Z}}}
\newcommand{\supp}{{\mathop{\rm supp\ }}}
\newcommand\numberthis{\addtocounter{equation}{1}\tag{\theequation}}
\newcommand{\s}{\mathbb{S}}
\newcommand{\di}{\,\mathrm{d}}
\newcommand{\abs}[1]{\left|#1\right|}
\newcommand{\norm}[1]{\lVert#1\rVert}
\newcommand{\bvec}[1]{\boldsymbol{#1}}
\newcommand{\inprodtwo}[2]{\left \langle #1 , #2\right \rangle}
\begin{document}
\title[Minimizers for Poincar\'e--Sobolev inequalities]{\textbf{Existence and non-existence of minimizers for Poincar\'e--Sobolev inequalities}}

\author[Benguria]{Rafael~D.~Benguria$^1$}

\author[Vallejos]{Cristobal~Vallejos$^2$}

\author[Van Den Bosch]{Hanne~Van~Den~Bosch$^3$}

\address{$^1$ Instituto  de F\'\i sica, Pontificia Universidad Cat\' olica de Chile,}
\email{{rbenguri@fis.puc.cl}}

\address{$^2$ Facultad de F\'\i sica, Pontificia Universidad Cat\' olica de Chile,}
\email{{civallejos@uc.cl}}

\address{$^3$ Centro de Modelamiento Matem\'atico, CMM, FCFM, Universidad de Chile}
\email{{hvdbosch@cmm.uchile.cl}}

\begin{abstract} 
In this paper we study the existence and non-existence of minimizers for a type of (critical) Poincar\'{e}--Sobolev inequalities. We show that minimizers do exist for smooth domains in $\R^d$, an  also for some polyhedral domains. On the other hand, we prove the non-existence of minimizers in the rectangular isosceles triangle in $\R^2$. 
\end{abstract}

\maketitle

\section{Introduction} \label{intro}

In this paper, we continue the study of a special type of Poincar\'{e}--Sobolev inequalities, which are extensions to the case of bounded domains of Gagliardo--Nirenberg--Sobolev inequalities. For a bounded domain $\Omega \subset \R^d$, we define
\begin{equation}
\label{eq:var-prob}
G(\Omega, d) = \inf \frac{\int_{\Omega} \abs{\nabla u}^2   \left(\int_{\Omega} u^2   \right)^{2/d}}{\int_{\Omega} {\vert{u-u_\Omega \vert}}^{2+4/d}   },
\end{equation}
with the infimum taken over functions $u \in H^1(\Omega)$ and
\[
u_\Omega = \frac{1}{\abs{\Omega}} \int_\Omega u
\]
is the average of $u$.
In our previous work \cite{BeVaVdb018}, the main result was a lower bound for $G(\Omega,d)$ in convex domains. 
It was also shown that for $d= 1$, no minimizers exist.
Here, we concentrate on the existence of minimizers for $d \ge 2$. We will see that existence or non-existence depend strongly on the shape and regularity of the domain $\Omega$. 

Our main results are the following.
\begin{itemize}
\item {\bf Existence} of minimizers for $C^3$-smooth domains in $\R^d$ for $d \ge 2$. 
\item {\bf Existence} of minimizers in elongated rectangles.
\item {\bf Existence} of minimizers in hypercubes in $\R^d$ for $d \ge 10$.
\item {\bf Non-existence} of minimizers in the isosceles rectangular triangle.
\end{itemize}
 
In \cite{BeVaVdb018}, we conjectured the non-existence of  minimizers for the square, but proving this remains an open problem. From the result for the triangle, we obtain that minimizers in the square, if they exist, are not symmetric with respect to the diagonal.

The inequality corresponding to \eqref{eq:var-prob} in the whole of $\R_d$ is the Gagliardo--Nirenberg--Sobolev inequality (also known as Moser's inequality)
\begin{equation}
\int_{\R^d} \abs{\nabla u}^2     \ge G(d) \left( \int_{\R^d} u^2    \right)^{-2/d}   \int_{\R^d} u^{2(1+2/d)}   ,
\label{GNSRDIntro}
\end{equation} 
where $G(d)$ is the sharp constant.
In this case, it is well known that minimizers exist and are unique up to translations, scalings and space dilations. 

The main tool to establish both existence and non-existence of minimizers is a treshold for the loss of compactness (in the spirit of Brezis and Lieb \cite[Section 4.B]{BrLi983}).
For smooth domains, loss of compactness can only be due to concentration on the boundary.

\begin{theorem} \label{thm:treshold-smooth}
For a bounded $C^2$-domain $\Omega \subset \R^d$ , define $G(\Omega, d)$ as in \eqref{eq:var-prob} and let $G(d)$ be the sharp constant in \eqref{GNSRDIntro}. 
Then
\[
G(\Omega, d) \le G(d) /2^{2/d}
\]
and if the inequality is strict, a minimizer exists.
\end{theorem}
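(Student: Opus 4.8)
The plan is to establish the statement in two independent parts: the universal bound $G(\Omega,d)\le G(d)/2^{2/d}$, via an explicit concentrating test function placed on $\partial\Omega$; and the existence of a minimizer under the strict inequality, via the direct method combined with a blow‑up analysis that identifies $G(d)/2^{2/d}$ as the compactness threshold.

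\emph{The upper bound.} Let $Q$ be the radial optimizer of \eqref{GNSRDIntro}, fix $x_0\in\partial\Omega$, and put $u_\varepsilon(x)=Q((x-x_0)/\varepsilon)$, restricted to $\Omega$. Since $\Omega$ is $C^2$, the rescaled domains $(\Omega-x_0)/\varepsilon$ converge, as $\varepsilon\to0$, to the half‑space $H$ with $\partial H$ the tangent plane to $\partial\Omega$ at $x_0$; by dominated convergence each of $\int_\Omega|\nabla u_\varepsilon|^2$, $\int_\Omega u_\varepsilon^2$, $\int_\Omega u_\varepsilon^{2+4/d}$ equals $\varepsilon^{d-2}$ (resp.\ $\varepsilon^d$, $\varepsilon^d$) times a factor converging to \emph{one half} of the corresponding integral of $Q$ over $\R^d$, because $Q$ is symmetric about the hyperplane through its center parallel to $\partial H$. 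The mean $u_\Omega=O(\varepsilon^d)$ contributes only lower‑order corrections, so $\int_\Omega|u_\varepsilon-u_\Omega|^{2+4/d}=\tfrac12\varepsilon^d\int_{\R^d}Q^{2+4/d}(1+o(1))$. Collecting the powers of $\tfrac12$ gives, in the limit, $\bigl(\tfrac12\cdot(\tfrac12)^{2/d}/\tfrac12\bigr)=2^{-2/d}$ times the GNS quotient of $Q$, hence $G(\Omega,d)\le 2^{-2/d}G(d)$.

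\emph{Existence, the compact case.} First reduce to mean‑zero competitors: for any $u$, writing $c=u_\Omega$ and using $\|u\|_2^2=\|u-c\|_2^2+c^2|\Omega|\ge\|u-c\|_2^2$ shows that the quotient in \eqref{eq:var-prob} does not increase when $u$ is replaced by $u-u_\Omega$, so it suffices to minimize $\|\nabla u\|_2^2\|u\|_2^{4/d}/\|u\|_{2+4/d}^{2+4/d}$ over $u\in H^1(\Omega)$ with $\int_\Omega u=0$. Take such a minimizing sequence $u_n$, normalized by the $u\mapsto\lambda u$ scaling so that $\|u_n\|_{2+4/d}=1$; along a subsequence $u_n\rightharpoonup u$ in $H^1(\Omega)$, and by Rellich $u_n\to u$ in $L^2(\Omega)$, so $\int_\Omega u=0$ and $\|u_n\|_2\to\|u\|_2$. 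If $\|\nabla u_n\|_2$ stays bounded, then $u_n$ is bounded in $H^1(\Omega)$ and, since $2+4/d$ is a subcritical Sobolev exponent, $u_n\to u$ strongly in $L^{2+4/d}(\Omega)$; thus $\|u\|_{2+4/d}=1$, $u$ is a nonconstant competitor, and by weak lower semicontinuity of $\|\nabla\cdot\|_2$ it realizes $G(\Omega,d)$.

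\emph{Existence, the concentration case.} It remains to exclude $\|\nabla u_n\|_2\to\infty$. Then $\|\nabla u_n\|_2^2\|u_n\|_2^{4/d}\to G(\Omega,d)<\infty$ forces $\|u_n\|_2\to0$, hence $u=0$, and $|u_n|^{2+4/d}$ concentrates. Using Lions' concentration function, select $x_n\to x_0\in\overline\Omega$ and scales $\sigma_n\to0$ with $\int_{B(x_n,\sigma_n)\cap\Omega}|u_n|^{2+4/d}=\tfrac12$, and rescale $v_n(y)=\tau_n u_n(x_n+\sigma_n y)$ on $\Omega_n:=(\Omega-x_n)/\sigma_n$, with $\tau_n$ fixed by $\|v_n\|_{L^{2+4/d}(\Omega_n)}=1$. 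Scale invariance of the quotient gives $\|\nabla v_n\|_2^2\|v_n\|_2^{4/d}=\|\nabla u_n\|_2^2\|u_n\|_2^{4/d}$. Because $\Omega$ is $C^2$, $\Omega_n$ converges to all of $\R^d$ (interior concentration) or to a half‑space $H$ with \emph{no} boundary condition imposed (boundary concentration); extracting a nonzero weak limit $v$ on this limit domain and passing to a subsequence along which the relevant norms converge to finite positive limits, one gets $\liminf\|\nabla u_n\|_2^2\|u_n\|_2^{4/d}\ge\|\nabla v\|_2^2\|v\|_2^{4/d}$. Now the crucial computation: applying \eqref{GNSRDIntro} to $v$ on $\R^d$, and in the half‑space case to the even reflection $\bar v$ of $v$ across $\partial H$ (for which $\|\nabla\bar v\|_2^2=2\|\nabla v\|_{L^2(H)}^2$ and $\|\bar v\|_p^p=2\|v\|_{L^p(H)}^p$), yields in both cases
\[
\|\nabla v\|_2^2\,\|v\|_2^{4/d}\ \ge\ 2^{-2/d}\,G(d)\,\|v\|_{2+4/d}^{2+4/d}.
\]
If the bubble $v$ carries all the $L^{2+4/d}$ mass, then $\|v\|_{2+4/d}=1$ and $G(\Omega,d)\ge 2^{-2/d}G(d)$, contradicting strictness; otherwise one repeats the argument on the residual sequence (still $\rightharpoonup0$, with $L^2$ norm $\to0$, carrying the remaining mass), and the standard concentration‑compactness bookkeeping shows the total cost is still at least $2^{-2/d}G(d)$, again a contradiction. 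Hence $\|\nabla u_n\|_2$ is bounded and a minimizer exists.

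The main obstacle is the blow‑up step: choosing the rescaling so that $v_n$ has a nontrivial limit (the no‑vanishing part of Lions' lemma), checking that for a $C^2$ domain the rescaled domains $\Omega_n$ converge to a half‑space with no spurious boundary condition in the limit, and — the most delicate point — controlling the extra factor $\|v_n\|_{L^2(\Omega_n)}^{4/d}$ (which a priori could degenerate) together with the bookkeeping that redistributes the $L^{2+4/d}$ mass and the Dirichlet energy among bubbles. The even‑reflection identity on the half‑space is what produces the sharp constant $2^{2/d}$, and its agreement with the test‑function computation confirms that the threshold in the statement is optimal.
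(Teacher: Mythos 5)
Your upper bound is the same argument as the paper's: a GNS optimizer concentrated at a boundary point, with the radial symmetry of $Q$ producing the factor $1/2$ in each integral and hence the net $2^{-2/d}$; the paper does the same thing, carefully tracking the $O(\lambda)$ correction coming from the $C^2$ regularity of $\partial\Omega$, but the conclusion is identical.

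For the existence part you take a genuinely different route. The paper does \emph{not} run a Lions-type blow-up. Instead, it proves a localization lemma (Lemma~\ref{lem:localizing}) based on an IMS partition of unity on a fine grid of scale $\delta$ with transition width $\eta$, plus an averaging over the grid position to control the cutoff error, yielding $G(\Omega,d)\ge \limsup_{\delta\to0}G_{4/d}(\Omega,d,\delta)$ whenever no minimizer exists. One then needs only to lower-bound the \emph{local} constant $G_{4/d}(\Omega,d,\delta)$ for functions supported in a single small ball, which is done by straightening $\partial\Omega$ to the half-space and extending by reflection, giving $(1+C\delta)^{-2}2^{-2/d}G(d)$. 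This sidesteps the extraction of a blow-up profile entirely: because the local problem already carries the full quotient $\|\nabla v\|_2^2\|v\|_2^{4/d}/\|v\|_{2+4/d}^{2+4/d}$, there is no need to control the $L^2$ norm of a rescaled sequence separately, and no residual sequence to iterate on.

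Your concentration-compactness version has a real gap precisely at the step you flag as ``most delicate.'' After rescaling, the normalization fixes $\|v_n\|_{L^{2+4/d}(\Omega_n)}=1$ and the product $\|\nabla v_n\|_2^2\|v_n\|_2^{4/d}$ is conserved, but nothing so far prevents $\|v_n\|_{L^2(\Omega_n)}$ from tending to $0$ or $\infty$: the quotient is invariant under a further zoom $v_n\mapsto v_n(\mu\,\cdot)$, and the Lions pivot $\int_{B(0,1)}|v_n|^{2+4/d}=\tfrac12$ only controls the $L^{2+4/d}$ density locally, not the global $L^2$ norm (which can come from a spread-out tail on the huge domain $\Omega_n$). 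If $\|v_n\|_2\to\infty$ then $\|\nabla v_n\|_2\to0$, the local limit is a nonzero constant, and the GNS inequality on the half-space cannot be applied; you do not show this degeneration cannot occur. Likewise, the claim that ``standard concentration-compactness bookkeeping'' closes the multi-bubble case is not automatic here: the usual Brezis--Lieb additivity applies to $\|\nabla u_n\|_2^2$ and $\|u_n\|_{2+4/d}^{2+4/d}$, but the factor $\|u_n\|_2^{4/d}$ does not split multiplicatively across bubbles living at different scales, and some argument is required to propagate the estimate $\ge 2^{-2/d}G(d)\|\cdot\|_{2+4/d}^{2+4/d}$ through the decomposition. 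These points can presumably be repaired, but as written they are gaps, and the paper's IMS-localization route is specifically chosen to avoid them.
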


If the domain is not smooth, loss of compactness can be due to concentration at corners or edges. For simplicity, we state this result only for planar domains.
\begin{theorem} \label{thm:treshold-corners}
Let $\Omega \subset \R^2$ be a bounded planar domain, piecewise $C^2$ and with finitely many corners of interior angles $0<\alpha_j \le 2\pi$, $j=1, \dots ,N$. Assume for simplicity that $\partial \Omega$ does not have self-intersections. Define $G(\Omega, d)$ as in \eqref{eq:var-prob} and let $G(d)$ be the sharp constant in \eqref{GNSRDIntro}. 
Then
\[
G(\Omega, 2) \le G(2) \frac{1}{2\pi} \min (\pi, \alpha_1, \cdots , \alpha_N)
\]
and if the inequality is strict, a minimizer exists.
\end{theorem}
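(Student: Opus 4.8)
The plan is to mimic the proof of Theorem 1.1 but with the blow-up profiles adapted to the corner geometry. First I would set up the concentration-compactness dichotomy for a minimizing sequence $u_n$ for $G(\Omega,2)$, normalized so that $\int_\Omega u_n = 0$ (using translation invariance of the quotient by constants), $\int_\Omega |u_n|^{4} = 1$, and $\int_\Omega |\nabla u_n|^2 \cdot (\int_\Omega u_n^2)$ converging to $G(\Omega,2)$. Along a subsequence, $u_n \rightharpoonup u$ in $H^1(\Omega)$. The Brezis--Lieb lemma splits the $L^4$ norm, and the Gagliardo--Nirenberg--Sobolev inequality \eqref{GNSRDIntro} applied to the remainder controls the lost mass: if no mass is lost, $u$ is a minimizer; otherwise mass escapes to a point $x_0 \in \overline\Omega$ by a vanishing/dichotomy argument (the subcritical nature of the $L^2$ and gradient terms rules out the dichotomy case, as in \cite{BrLi983}). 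The key quantitative point is then the value of the best constant for a bubble concentrating at $x_0$.

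The second and central step is the local analysis at the concentration point $x_0$. If $x_0$ is an interior point, the relevant constant is $G(2)$; if $x_0$ lies on a smooth part of $\partial\Omega$, it is $G(2)/2$ (half-space, exactly as in Theorem 1.1); and if $x_0$ is a corner with interior angle $\alpha_j$, the correct constant is obtained by rescaling a Talenti-type bubble against the infinite sector $S_{\alpha_j} = \{ (r,\theta) : r>0, \ 0<\theta<\alpha_j \}$. Here I would use that, near the corner, $\Omega$ looks (after a $C^2$ straightening of the two edges and a rescaling $u_n(x) = \lambda_n^{?} v_n(\lambda_n^{-1}(x-x_0))$ with $\lambda_n \to 0$) like the sector $S_{\alpha_j}$, and the limiting functional on the sector is scale-invariant. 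One then needs the sharp constant for the sector problem
\[
G(S_\alpha) = \inf_{v} \frac{\int_{S_\alpha} |\nabla v|^2 \left(\int_{S_\alpha} v^2\right)}{\int_{S_\alpha} v^{4}},
\]
(with $v$ not required to have zero average since the sector is unbounded and the competitor can be taken compactly supported). I would prove $G(S_\alpha) \ge \frac{\alpha}{2\pi} G(2)$ by a symmetric-decreasing rearrangement into a disk sector followed by comparison with the full-plane Moser minimizer: extending a function on $S_\alpha$ periodically/by reflection to the full plane multiplies the three integrals by factors that yield exactly the ratio $\alpha/(2\pi)$ when $\alpha \le 2\pi$, and for $\alpha > \pi$ the half-space bubble (constant $G(2)/2$) does better, which explains the $\min(\pi, \alpha_1,\dots,\alpha_N)$. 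Conversely, an explicit trial function (a truncated, rescaled Moser bubble restricted to the sector, glued into $\Omega$ near the sharpest corner) shows $G(\Omega,2) \le \frac{1}{2\pi}\min(\pi,\alpha_1,\dots,\alpha_N)\,G(2)$, giving the stated inequality; and if it is strict, the concentration constant at every possible $x_0$ strictly exceeds $G(\Omega,2)\cdot(\text{mass fraction})^{2/d}$-type bound, forcing no loss of compactness, hence a minimizer.

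The main obstacle I anticipate is the rigorous local blow-up at a corner: unlike a smooth boundary point, the two edges meeting at $x_0$ cannot be simultaneously flattened by a single $C^2$ diffeomorphism while preserving the angle, so one must argue that the $C^2$ regularity of each edge makes the rescaled domains $\lambda_n^{-1}(\Omega - x_0)$ converge (locally, in a suitable sense — e.g. Hausdorff convergence of boundaries on compact sets) to the model sector $S_{\alpha_j}$, and that this convergence is strong enough to pass to the limit in all three integrals of the quotient. A secondary technical point is handling the zero-average constraint: the bubble's mean over $\Omega$ is $O(\lambda_n^{d})$ (negligible), so subtracting it does not affect the leading-order asymptotics, but this must be checked. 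Finally, one must verify that when several corners share the minimal angle, or when the minimal "angle" is $\pi$ (smooth boundary point competing), the infimum over concentration scenarios is indeed governed by $\min(\pi,\alpha_1,\dots,\alpha_N)$, which is where the truncation of $\pi$ in the statement comes from.
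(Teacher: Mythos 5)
Your proposal captures the right heuristic --- the lost compactness must concentrate at a point of $\overline\Omega$, and the worst constant is governed by the sector model problem at the sharpest corner, with the $\alpha/(2\pi)$ factor coming from radial comparison --- but your lower-bound machinery is genuinely different from the paper's, and two of your steps have gaps worth flagging.

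The paper does \emph{not} run a concentration--compactness dichotomy. Instead it proves a quantitative localization lemma (Lemma~\ref{lem:localizing}): after normalizing $\int u_n^2=1$, one partitions $\R^d$ by an IMS family of cut-offs on cubes of side $\sim\delta$ with overlap $\sim\eta$, applies the IMS localization formula, and averages over translations of the lattice to control the error $\sum_k(\chi_k^2-\chi_k^{2+2p})$. This yields $\int|\nabla u_n|^2\ge G_p(\Omega,d,2\sqrt d\,\delta)(1-C\eta/\delta)\int|u_n|^{2+2p}-C\eta^{-2}$ directly, with $G_p(\Omega,d,\delta)$ the infimum of the quotient over functions supported in balls of radius $\delta$. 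Choosing $\delta=m_n^{-1/2}$, $\eta=m_n^{-1}$ where $m_n=\|u_n\|_{L^4}\to\infty$ kills the error terms. This avoids entirely the ``otherwise mass escapes to a single point $x_0$'' step, which is the weak point in your outline: the problem is exactly scale-invariant, so it is \emph{critical}, and ruling out dichotomy is not automatic from ``subcriticality'' --- it requires a strict subadditivity argument that you do not supply. The localization lemma buys you a clean lower bound without ever having to isolate a unique concentration point, and it works uniformly over the corners and the smooth parts.

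Your ``main obstacle'' --- that the two edges meeting at the vertex cannot be simultaneously flattened by a single $C^2$ diffeomorphism preserving the angle --- is in fact a solvable, local issue, and the paper resolves it concretely. In polar coordinates centered at the vertex, the boundary near the corner is $\{\theta=\theta_\pm(r)\}$ with $\theta_\pm\in C^2$ and $(\theta_+-\theta_-)/\alpha_i = 1+O(r)$. The map
\[
f(r,\phi)=v\bigl(r,\theta(r,\phi)\bigr),\qquad \theta(r,\phi)=\frac{\theta_+(r)-\theta_-(r)}{\alpha_i}\,\phi+\theta_-(r),
\]
takes $C_{\alpha_i}\cap B(s,\delta)$ to $\Omega\cap B(s,\delta)$, has Jacobian $1+O(r)$, and distorts $|\nabla v|^2$ by a factor $1+O(r)$. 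There is no need for soft Hausdorff convergence of rescaled domains; an explicit bi-Lipschitz change of variables with controlled constants is enough to compare all three integrals.

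Finally, your argument that ``extending periodically/by reflection to the full plane'' gives the factor $\alpha/(2\pi)$ is not correct for general $\alpha$: reflection across the edges only tiles the plane when $\alpha=\pi/k$. What actually works (and what the paper does) is: symmetrize to a radial decreasing function on the sector, and then observe that for radial $f$ the one-dimensional quotient $\bigl(\int_0^\infty |f'|^2 r\,\dd r\bigr)\bigl(\int_0^\infty f^2 r\,\dd r\bigr)/\int_0^\infty f^4 r\,\dd r$ is \emph{identical} on $C_\alpha$ and on $\R^2$; the angular integrals contribute the multiplicative factor $\alpha$ versus $2\pi$. The minimum over radial functions on $\R^2$ equals $G(2)$ (since the plane minimizer is radial), so the sector constant is $\frac{\alpha}{2\pi}G(2)$. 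The truncation at $\pi$ in $\min(\pi,\alpha_1,\dots,\alpha_N)$ then arises because concentration at a smooth boundary point always gives $G(2)/2$, which beats $\frac{\alpha_i}{2\pi}G(2)$ whenever $\alpha_i>\pi$.
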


In \cite{BeVaVdb018}, the analogue of Theorems~\ref{thm:treshold-smooth} and \ref{thm:treshold-corners}  was proven for the special case $\Omega = [0,1]^d$. Although the heuristic idea remains the same, the proof given there relied on a rearrangement inequality that is only valid in cubes or polygons. 
Here, we give a different proof using localization with a well-chosen partition of unity.

Theorems~\ref{thm:treshold-smooth}
and \ref{thm:treshold-corners} will be proven in Section~\ref{sec:treshold}. The proof of existence of minimizers for $C^3$-smooth domains in $d\ge 2 $ is contained in Section~\ref{sec:smooth}. 
Here, the idea is to construct competitors by concentrating the minimizer of the problem in $\R^d$ at a suitable boundary point and obtain the sign of the next to leading order in the expansion of the quotient \eqref{eq:var-prob}. This strategy goes back to the original work of Brezis-Nirenberg \cite{BrNi983}. Its use in the present context was suggested to us by Rupert Frank.
Contrary to the case of \cite{BrNi983}, in this paper we expand near a boundary point. Such expansions are common in the literature on Partial Differential Equations, see for instance \cite{PiFeWe999, Dip011} and references therein.

The proof of non-existence of minimizers for the rectangular isosceles triangle, based on Theorem~\ref{thm:treshold-corners} and symmetry considerations, is in Section~\ref{sec:triangle}. The final section~\ref{sec:rectangles} contains the proofs of existence for rectangles in $d = 2$ and hypercubes in $d \ge 10 $.

 The examples we give show that existence or non-existence of minimizers depend in a non-trivial way on the geometry of the boundary of the domain. This is because the problem is precisely scale invariant. In the appendix, we show explicitly that, for the generalized problem 
\begin{equation*}
G_p(\Omega, d) = \inf_{H^1(\Omega)} \frac{\int_\Omega \abs{\nabla u}^2 \left(\int_\Omega u^2 \right)^{p}}{\int_\Omega \abs{u-u_\Omega}^{2+2p}}, \quad 0 \le p \le 2/(d-2) 
\end{equation*}
minimizers exist for $p < 2/d$ and do not exist for $p > 2/d$.

\section{Compactness treshold} \label{sec:treshold}

In this section, we prove Theorems~\ref{thm:treshold-smooth} and \ref{thm:treshold-corners}. Before going into the details, let us quickly sketch the philosophy. The upper bound is easy by constructing a sequence of test functions consisting of the minimizer of the problem in $\R^d$ concentrating at a boundary point (respectively the corner of smallest opening). 

Then, we prove that non-existence of a minimizer implies the reverse inequality. 
In order to do so, we observe that non-existence can only be due to concentration of minimizing sequence. We localize the concentrating sequence at a suitable scale and pass to the \emph{model problem} on a cone by straightening the boundary. For smooth domains, all model problems are the same and give the constant for the halfspace. For the curvilinear polygon, some points give different constants, but the smallest one is given by the smallest angle. 

The following lemma takes care of the localization, which does not require regularity of the boundary or the specific exponent $p=2/d$.
\begin{lemma}[Localization] \label{lem:localizing}
 Fix $\delta > \eta >0$, and $p \in (0,2/(d-2))$. For all $v \in H^1(\Omega)$ with $\norm{v}_{L^2} = 1$, 
 we have
 \begin{align} \label{eq:model problem}
  \int_\Omega \abs{\nabla v}^2  \ge G_p(\Omega, d,2\sqrt{d}\delta)(1-C \eta \delta^{-1}) \int_\Omega \abs{v}^{2+p} -C \eta^{-2},
 \end{align}
where $C$ is a constant depending only on $p, d$ and
\begin{equation}
 G_p(\Omega, d,\delta) = \min_{s \in \R^d} \inf_{v \in H_0^1(B(s,\delta))} \frac{\int_\Omega \abs{\nabla v}^2 \left(\int_\Omega v^2 \right)^p}{\int_\Omega \abs{v}^{2+2p}},
\end{equation}
with the convention that the quotient equals $+ \infty$ if the denominator equals zero.
\end{lemma}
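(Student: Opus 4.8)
The plan is to localise the Dirichlet energy with an IMS-type partition of unity on the lattice $\delta\Z^d$ and to feed the localised pieces into the definition of $G_p(\Omega,d,2\sqrt d\,\delta)$. We may assume $\eta\le\delta/2$, since otherwise $1-C\eta\delta^{-1}\le 0$ for $C$ large and the inequality holds trivially, its left side being nonnegative. First I would fix the standard partition-of-unity profile $\phi\in C_c^\infty(\R^d)$ with $\sum_{k\in\Z^d}\phi(x-\delta k)^2=1$ for all $x$, $0\le\phi\le 1$, $\phi\equiv 1$ on the open cube of side $\delta-2\eta$ centred at $0$, $\supp\phi$ contained in the cube of side $\delta+2\eta$ centred at $0$, and $\abs{\nabla\phi}\le C\eta^{-1}$. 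Setting $\chi_j^y(x)=\phi(x-y-\delta j)$ for $y\in Q_0:=[0,\delta)^d$ and $j\in\Z^d$, one then has, uniformly in $y$: $\sum_j(\chi_j^y)^2\equiv 1$; $\supp\chi_j^y$ lies in the ball $B_j^y:=B(y+\delta j,\,2\sqrt d\,\delta)$ (using $\eta<\delta$); each point of $\R^d$ lies in at most $C_d$ of these supports; and consequently $\sum_j\abs{\nabla\chi_j^y}^2\le C\eta^{-2}$ pointwise.

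For any fixed $y$, the IMS identity and $\norm{v}_{L^2}=1$ give
\[
\int_\Omega\abs{\nabla v}^2=\sum_j\int_\Omega\abs{\nabla(\chi_j^y v)}^2-\sum_j\int_\Omega\abs{\nabla\chi_j^y}^2\,v^2\ \ge\ \sum_j\int_\Omega\abs{\nabla(\chi_j^y v)}^2-C\eta^{-2}.
\]
Each $\chi_j^y v$ is supported in $B_j^y$, hence is an admissible competitor for the problem defining $G_p(\Omega,d,2\sqrt d\,\delta)$ with centre $s=y+\delta j$; consequently, for every $j$ (trivially if $\chi_j^y v\equiv 0$ on $\Omega$),
\[
\int_\Omega\abs{\nabla(\chi_j^y v)}^2\Bigl(\int_\Omega(\chi_j^y v)^2\Bigr)^{p}\ \ge\ G_p(\Omega,d,2\sqrt d\,\delta)\int_\Omega\abs{\chi_j^y v}^{2+2p}.
\]
Because $\int_\Omega(\chi_j^y v)^2\le\int_\Omega v^2=1$ and $p>0$, the bracket is $\le 1$; dropping it and summing over $j$ gives
\[
\sum_j\int_\Omega\abs{\nabla(\chi_j^y v)}^2\ \ge\ G_p(\Omega,d,2\sqrt d\,\delta)\int_\Omega\Bigl(\sum_j(\chi_j^y)^{2+2p}\Bigr)\abs{v}^{2+2p}.
\]

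The decisive step is the choice of $y$: pointwise one only has $\sum_j(\chi_j^y)^{2+2p}\le\sum_j(\chi_j^y)^2=1$, with a strict, purely dimensional, defect wherever the supports overlap, so no single $y$ makes this sum close to $1$ pointwise and the factor $1-C\eta\delta^{-1}$ must instead be produced by averaging over $y$. For every $x$, the $\delta\Z^d$-periodicity of $w\mapsto\sum_j\phi(w-\delta j)^{2+2p}$ gives
\[
\frac{1}{\abs{Q_0}}\int_{Q_0}\sum_j\chi_j^y(x)^{2+2p}\,dy\;=\;\frac{1}{\delta^d}\int_{\R^d}\phi^{2+2p}\;=\;1-\frac{1}{\delta^d}\int_{\R^d}\phi^2\bigl(1-\phi^{2p}\bigr)\;\ge\;1-C\eta\delta^{-1},
\]
since $\int_{\R^d}\phi^2=\delta^d$ (integrate $\sum_k\phi(\cdot-\delta k)^2=1$ over $Q_0$) and $0\le\int_{\R^d}\phi^2(1-\phi^{2p})\le\abs{\{0<\phi<1\}}\le C\eta\delta^{d-1}$, the set $\{0<\phi<1\}$ being contained in the shell between the cubes of side $\delta-2\eta$ and $\delta+2\eta$. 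Multiplying this identity by $\abs{v(x)}^{2+2p}$, integrating over $\Omega$ and applying Fubini, I would obtain a $y_*\in Q_0$ with $\int_\Omega\bigl(\sum_j(\chi_j^{y_*})^{2+2p}\bigr)\abs{v}^{2+2p}\ge(1-C\eta\delta^{-1})\int_\Omega\abs{v}^{2+2p}$; feeding the partition $\{\chi_j^{y_*}\}$ into the chain of inequalities of the previous paragraph then yields the asserted bound (with the exponent $2+2p$ appearing in the definition of $G_p$). Apart from this averaging, everything---the construction of $\phi$, the IMS identity, and the use of the definition of $G_p$---is routine and, as remarked in the text, uses neither regularity of $\Bound$ nor the value $p=2/d$; the one genuine obstacle is recognising that the gain has to be extracted by averaging over translations of the partition of unity.
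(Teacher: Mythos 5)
Your proof is correct and follows essentially the same route as the paper's: an IMS partition of unity on a lattice of spacing comparable to $\delta$, insertion of each localized piece as a competitor for $G_p(\Omega,d,2\sqrt d\,\delta)$, and an averaging over translations of the partition to control the defect $\sum_j\bigl((\chi_j)^2-(\chi_j)^{2+2p}\bigr)$ in the overlap regions. The only differences are cosmetic (the paper uses the tensor-product cut-off $\chi(x)=\prod_i\phi((\abs{x_i}-\delta)/\eta)$ on the lattice $2\delta\Z^d$ and phrases the averaging step as a bound on the averaged error term rather than an explicit Fubini selection of $y_*$), and you are more careful in spelling out that selection.
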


\begin{proof}
 We localize in cubes of size $\delta$ with a smooth cut-off varying on lengths $\eta$.
Explicitly, we make the following construction. Fix a smooth non-increasing function $\phi: \R \mapsto [0,1]$ such that $\phi(x) = 1$ for $x \le -1/2$ and $\phi(x) = 0$ for $x \ge 1/2$, and such that $\phi(x)^2 + \phi(-x)^2 = 1$.
We define
\[
\chi (x) = \prod_{i= 1}^d \phi \left(\frac{\abs{x_i}- \delta}{\eta} \right).
\]
By induction on $d$, one can show that, since $\delta > \eta$ 
\[
1(x) = \sum_{k \in \Z^d} \chi^2 (x- 2 \delta k).
\]
For $k \in \Z^d$, we define $\chi_k (x)= \chi (x-2 \delta k) $ and $v_{k} = \chi_k v $.
From this point on, $C$ will denote a constant depending on the choice of $\phi$, and the dimension $d$.

By the IMS formula (see \cite[Theorem 3.1]{CFKSbook} or the original research papers \cite{Ism961, Mor979,MorSi980, Sig982}) , we have 
\begin{align*}
\int_\Omega \abs{\nabla v} 
&\ge \sum_{k \in \Z^d} \int_{\Omega \cap \supp (\chi_k) }\left( \abs{\nabla v_{k}}^2 - \frac{C}{\eta^2} v^2 \right)\\
& \ge \sum_{k \in \Z^d} \int_\Omega  \abs{\nabla v_{k}}^2 - \frac{C}{\eta^2} 
\end{align*}
where we have bounded $\abs{\nabla \chi} \le C \eta^{-1}$ and used the fact that a fixed point $x$ is in the support of at most $2^d$ cut-off functions, and finally the normalization of $v$
By construction, $\supp v_k \subset B(\delta k, 2\sqrt{d} \delta )$, so
\begin{align*}
\int_\Omega \abs{\nabla v_{k}}^2 \left(\int_\Omega v_{k}^2\right)^{p} \ge G_p(\Omega, d, 2\sqrt{d}\delta) \int_\Omega \abs{v_k}^{2+2p}. 
\end{align*}

Combining with
\[
1 = \left(\int_\Omega v^2\right)^{p} \ge \left(\int_\Omega v_{k}^2\right)^{p}
\]
gives
\begin{align*}
\int_\Omega\abs{ \nabla v}^2
& \ge  \sum_{k \in \Z^d} \left(\int_\Omega  \abs{\nabla v_{k}}^2 \left( \int_\Omega v_{k}^2\right)^{p} \right)- \frac{C}{\eta^2} \\
& \ge G_p(\Omega, d, 2\sqrt{d}\delta ) \left[ \int_{\Omega} \abs{v}^{2+2p} -
\int_\Omega \abs{v}^{2+2p}\sum_{k \in \Z^d} \left( \chi_k^2-\chi_k^{2+2p}   \right)\right] -  \frac{C}{\eta^2}. \\
\end{align*}

In order to bound the first error term, we average over the position of the origin in $[-\delta, \delta]^d$, which corresponds to replacing $\chi_k(x)$ by $\chi_k(x- u)$.
We bound
\begin{align*}
&\frac{1}{(2 \delta)^d}\int_{[-\delta, \delta]^d}  \left (\int_\Omega \abs{v}^{2+2p}(x)\sum_{k \in \Z^d} \left( \chi_k^2(x-u)-\chi_k^{2+2p}(x-u)   \right)\di x \right) \di u\\
&= \frac{1}{(2 \delta)^d} \int_\Omega\abs{v}^{2+2p}(x) \left( \sum_{k \in \Z^d} \int_{[-\delta, \delta]^d} \left( \chi_k^2(x-u)-\chi_k^{2+2p}(x-u)   \right) \di u \right)  \di x \\
&\le \frac{2^d}{(2 \delta)^d} \int_\Omega\abs{v}^{2+2p}(x) \di x  \int_{\R^d} \left( \chi^2(y)-\chi^{2+2p}(y)   \right) \di y   \\
&\le C \frac{( \delta + \eta)^d - ( \delta - \eta)^d}{(2 \delta)^d} \int_\Omega\abs{v}^{2+2p}(x) \di x   \\
&\le C \frac{\eta}{\delta} \int_\Omega\abs{v}^{2+2p}(x) \di x .
\end{align*}
In the third line, we have used the fact that the point $x$ is in the support of at most $2^d$ localization functions to get rid of the sum over $k$ before changing variables. Finally, we use the fact that the support of $\chi $ is included in $[-\delta - \eta, \delta + \eta]^d$ and $\chi = 1$ in
$[-\delta + \eta, \delta - \eta]^d$.
This concludes the proof of the lemma.
\end{proof}

\begin{proof}[Proof of Theorem~\ref{thm:treshold-smooth}]

{\noindent \bf Upper bound.}
We center coordinates such that the origin is at the boundary and $\Omega$ is contained in the halfspace $\R^{d}_+$. 
For $\lambda \ge 1$, we define
$$
u_\lambda (x) = \lambda^{d/2} g( \lambda x), 
$$
where $g$ is a minimizer for \eqref{GNSRDIntro} centered at the origin. 
By scaling and radial symmetry, we find
\begin{align*}
0 &\le \abs{\Omega}^{-1}\int_{\Omega} u_\lambda    \equiv \overline{u_\lambda} \le C\lambda^{-d/2}   ,\\
&\int_{\Omega} u_\lambda^2  \le  \frac{1}{2} \int_{\R^d} g^2,\\
&\int_{\Omega} \abs{\nabla u_\lambda}^2   \le \lambda^2  \frac{1}{2} \int_{\R^d} \abs{\nabla g}^2 .
\end{align*}
For the denominator, we first use convexity of $t \mapsto t^{2+4/d}$,
\begin{align*}
\int_\Omega \abs{u_\lambda-\overline{u_\lambda}} ^{2+4/d}  
& \ge \int_\Omega u_\lambda^{2+4/d} - C \,\overline{u_\lambda} \int_\Omega u_\lambda^{1+4/d } \\
& \ge \int_\Omega u_\lambda^{2+4/d} - C \lambda^{-d/2} \lambda^{2-d/2}
\end{align*}
We fix $r_1 >0$ such that $\Omega \cap B(0,r_1)$ is the epigraph of a $C^2$-function $h : \R^{d-1} \cap B(0,r_1) \mapsto \R_+$, see Figure~\ref{fig:coordinate transforms}.
By assumption, $h$ vanishes to second order at $0$. Upon taking a smaller $r_1$, we may also assume that $h \le  r_1/2$.
Then, we find
\begin{align*}
 \int_{\Omega \cap B(0, r_1)} u_\lambda^{2+4/d} \ge \lambda^{2+d} \int_{\R^d_+ \cap B(0, r_1/2)} g^{2+4/d}(\lambda (x_1+h (x_t)), \lambda x_t) \di x_1 \di x_t\\
 = \lambda^2 \int_{\R^d_+ \cap B(0, \lambda r_1/2)} g^{2+4/d}(x_1+ \lambda h (x_t/ \lambda)), x_t) \di x_1 \di x_t \\
 \ge \frac{ \lambda^2}{2} \int_{B(0,\lambda r_1/2)} g^{2+4/d} - C \lambda \sup_{x_t \in B(0,  r_1/2)} \left(\abs{x_t}^{-2} h(x_t) \right) \int_{\R^d} \abs{x}^2 \abs{g'}(x) g^{1+4/d}(x) \di x,
\end{align*}
where use the convention $\abs{g'} = \abs{\nabla g} = \abs{\partial_{\abs{x}} g}$.
By the $C^2$-regularity of the boundary,
\[
 \sup_{x_t \in B(0,  r_1/2)} \abs{x_t}^{-2} h(x_t) \le C . 
\]
Using the exponential decay of $g$ we can bound
\begin{align*}
\int_\Omega \abs{u_\lambda-\overline{u_\lambda}} ^{2+4/d}  
& \ge  \frac{ \lambda^2}{2} \int_{\R^d} g^{2+4/d} - C \lambda .
\end{align*}
Thus, we obtain
\begin{align*}
G(\Omega, d) 
&\le \liminf_{\lambda \to \infty } \frac{\int_\Omega \abs{\nabla u_\lambda}^2    \left(\int_\Omega u_\lambda^2    \right)^{2/d}}{\int_\Omega \abs{u_\lambda-\overline{u_\lambda}} ^{2+4/d}   } \\
& \le 2^{-2/d} \frac{\int_{\R^d} \abs{\nabla g}^2    \left(\int_{\R^d} g^2    \right)^{2/d}}{\int_{\R^d} g^{2+4/d}  } = 2^{-2/d} G(d).
\end{align*}

\begin{figure}
\includegraphics[scale=1]{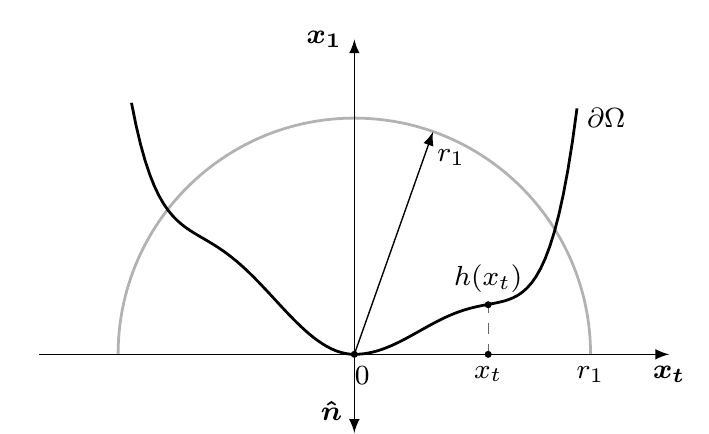} \hfill
\includegraphics[scale=1]{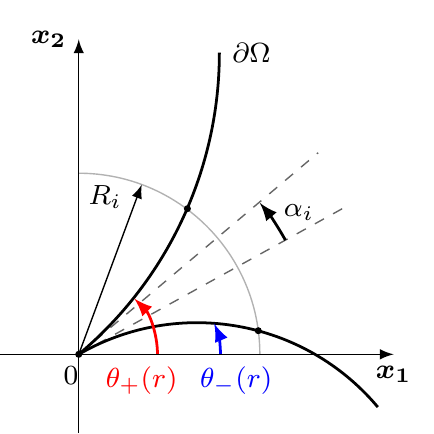}
\caption{\label{fig:coordinate transforms}}
Definition of coordinate transforms to straighten the boundary in the smooth case (left), or to map the boundary to a straight cone (right).
\end{figure}

 {\noindent \bf Lower bound.}
Let $u_n$ be a minimizing sequence for $\eqref{eq:var-prob}$, normalized such that $\int_\Omega u_n^2 = 1$ and $\int_\Omega u_n = 0$. By a standard argument, a minimizer exists if $\int_\Omega u_n^{2+4/d}$ is bounded along some subsequence. Therefore, we may assume that
\[
m_n = \norm{u_n}_{L^{2+4/d}(\Omega)} \to + \infty.
\]
For each $n$, we apply Lemma~\ref{lem:localizing} with $v= u_n$, $\delta= m_n^{-1/2}$ and $\eta = m_n^{-1}$.
With this choice, the lemma becomes
\begin{align*}
 G(\Omega, d)& = \lim_{n \to \infty} \frac{\int_\Omega \abs{\nabla u_n}^2}{\int_\Omega \abs{u_n}^{2+4/d}} \\
 &\ge \lim_{n \to \infty} G_{4/d}(\Omega, d, 2 \sqrt{d} \,m_n^{-1/2}) (1- C m_n^{-1/2}) - C m_n^{2-2-4/d} \\
& = \limsup_{\delta \to 0}  G_{4/d}(\Omega, d, \delta).
\end{align*}

If $s\in \R^d$ is such that $B(s, \delta) \subset \Omega$, we identify $v \in H_0^1(B(s,\delta))$ with its extension by $0$ in $H^1(\R^d)$, so
 \begin{equation}
  \inf_{v \in H_0^1(B(s,\delta))} \frac{\int_\Omega \abs{\nabla v}^2 \left(\int_\Omega v^2 \right)^{2/d}}{\int_\Omega \abs{v}^{2+4/d}} \ge G(d).
 \end{equation}
On the other hand, if $B(s,\delta)$ intersects the boundary of $\Omega$, we may as well replace $\delta$ by $2 \delta$ and assume $s \in \partial \Omega$. We assume that $\delta$ is sufficiently small such that, for each $s \in \partial \Omega$, the boundary $\partial \Omega$ can be seen as the graph of a $C^2$-function over the tangent plane. 

For definiteness, we fix a coordinate system with the origin at $s$ and the outward normal pointing along $-e_1$. Define $h$ as before.
For $v \in H^1_0(B(s,\delta))$, we define $f \in H^1(\R_+^d)$ by 
\[
f(x_1, x_t) = v(x_1+ h(x_t), x_t).
\] 
We compute
\begin{align}
\int_{\R^d_+}f^2 &= \int_\Omega v^2 ,  \qquad   
\int_{\R^d_+}f^{2+4/d} = \int_\Omega v^{2+4/d},  \quad \text{ and }  \label{eq:straightening-jac}\\
\int_{\R^d_+}\abs{\nabla f}^2 &\le \int_\Omega \abs{\nabla v}^2 (1 + \abs{\nabla_t h})^2 
\le \sup_{x_t \in B(s, \delta )}(1 + \abs{\nabla_t h(x_t)})^2 \int_\Omega \abs{\nabla v}^2. \label{eq:straightening-grad}
\end{align}
Since $\partial \Omega$ is $C^2$, $h$ is a $C^2$-function and
$\abs{\nabla_t h} \le C \delta $.
Since functions on the halfspace can be extended to $\R^d$ by reflection,
\begin{align*}
\frac{\int_\Omega \abs{\nabla v}^2 \left(\int_\Omega v^2\right)^{2/d}}{\int_\Omega v^{2+4/d}} 
&\ge (1 + C \delta)^{-2} \frac{\int_{\R^d_+}\abs{\nabla f}^2  \left( \int_{\R^d_+}\abs{\nabla f}^2\right)^{2/d}}{ \int_{\R^d_+}f^{2+4/d}}  \\
&\ge (1 + C \delta)^{-2} 2^{-d/2}G( d) .\numberthis \label{eq:straightening-conclusion}
\end{align*}
In summary, for sufficiently small $\delta > 0$,
\begin{align*}
G_{4/d}(\Omega, d, \delta) \ge (1 + C \delta)^{-2} 2^{-d/2}G( d) ,
\end{align*}
and, if a minimizer does not exist,
\begin{align*}
 G(\Omega, d)
 & \ge 
\limsup_{\delta \to 0}  G_{4/d}(\Omega, d, \delta) \ge 2^{-d/2}G( d). \qedhere
\end{align*}

\end{proof}

Now, we prove Theorem~\ref{thm:treshold-corners}. The proof is very similar, so we will sketch it and point out the differences due to the corners.

\begin{proof}[Proof of Theorem~\ref{thm:treshold-corners}]
{\bf Upper bound.}
If $\min(\pi, \alpha_1, \cdots \alpha_n) = \pi$, the bound holds by concentrating the minimizer of the problem in the plane, on one of the smooth points of the boundary.
If not, we take the origin at the vertex of some corner of opening $\alpha_i$. By assumption, there is $R_i>0$ such that $B(0,R_i)\cap \Omega $ contains no other corners and for each $r\le R_i$, $\partial B(0,r)\cap \Omega $ is simply connected.
In polar coordinates, $B(0,R_i)\cap \partial \Omega $ is given by $C^2$-functions $2\pi \ge \theta_+ (r)> \theta_-(r) \ge 0  $, as illustrated in Figure~\ref{fig:coordinate transforms}, left panel.
We have
\begin{equation}\label{eq:prop-theta}
\abs{\frac{\theta_+(r) - \theta_-(r)}{\alpha_i} - 1} \le C r, \quad \lim_{r \to 0} r \theta_{\pm}(r) = 0.
\end{equation}
Let $g$ be a minimizer of \eqref{GNSRDIntro}, $0\le\chi\le 1$ a smooth radial cut-off function with support in $B(0,R_1)$, $\chi = 1$ in $B(0,R_1/2)$, and define
\[
u_\lambda(x) = \chi(x) \lambda g(\lambda x).
\]
For the average,
\begin{align*}
0 &\le \abs{\Omega}^{-1}\int_{\Omega} u_\lambda    \equiv \overline{u_\lambda} \le C\lambda^{-1}.  
\end{align*}
For the other integrals, we use polar coordinates and \eqref{eq:prop-theta},
\begin{align*}
\int_{\Omega} u_\lambda^2 &\le \lambda^{2} \int_0^{R_1} (\theta_+(r) - \theta_-(r))  g^2(\lambda r) r \di r \\
&\le \frac{\alpha_i }{2 \pi} \int_{B(0,\lambda R_1)} g^2(x) \di x  + C \lambda^{-1} \int_{B(0,\lambda R_1)} \abs{x} g^2(x) \di x 
\end{align*}
and (recall that $u_\lambda$ is a radial function)
\begin{align*}
\int_{\Omega} \abs{\nabla u_\lambda}^2   
&\le \lambda^2\int_0^{R_1} (\theta_+(r) - \theta_-(r))  (\lambda \chi g' + \chi' g)^2 (\lambda r) r \di r\\
& \le \frac{\alpha_i }{2 \pi}\lambda^2 \int_{B(0,\lambda R_1)} \abs{\nabla g}^2(x) \di x + C\lambda \int_{B(0,\lambda R_1)} \left( \abs{x} (g')^2(x) + |gg'|(x)\right) \di x .
\end{align*}
Finally, for the denominator,
\begin{align*}
\int_{\Omega} \abs{u_\lambda - \overline{u_\lambda}}^4 \ge \frac{\alpha_i }{2 \pi}\lambda^2 \int_{B(0,\lambda R_1)} g^4(x) \di x - C \lambda \int_{B(0,\lambda R_1)} \abs{x} g^4(x) \di x - C.
\end{align*}
Since $g$ decays exponentially, we obtain
\begin{align*}
G(\Omega, d) 
&\le \liminf_{\lambda \to \infty } \frac{\int_\Omega \abs{\nabla u_\lambda}^2    \left(\int_\Omega u_\lambda^2    \right)^{2/d}}{\int_\Omega \abs{u_\lambda-\overline{u_\lambda}} ^{2+4/d}   } \\
& \le \frac{\alpha_i }{2 \pi} \frac{\int_{\R^d} \abs{\nabla g}^2    \left(\int_{\R^d} g^2    \right)^{2/d}}{\int_{\R^d} g^{2+4/d}  } = \frac{\alpha_i }{2 \pi} G(d).
\end{align*}

\noindent {\bf Lower bound.} 
Again, let $u_n$ be a minimizing sequence for $\eqref{eq:var-prob}$, normalized such that $\int_\Omega u_n^2 = 1$ and $\int_\Omega u_n = 0$ such that
\[
m_n = \norm{u_n}_{L^{2+4/d}(\Omega)} \to + \infty.
\]  
As in the proof of Theorem~\ref{thm:treshold-smooth}, by lemma~\ref{lem:localizing},
\begin{align*}
 G(\Omega, 2)& \ge
\limsup_{\delta \to 0}  G_{2}(\Omega, 2, \delta).
\end{align*}
We have to minimize the quotient in \eqref{eq:model problem}.
If $B(s,\delta)$ does not contain any corners, 
then \eqref{eq:straightening-conclusion} holds as before,
 \begin{align*}
\frac{\int_\Omega \abs{\nabla v}^2 \int_\Omega v^2}{\int_\Omega v^{4}} 
&\ge (1 + C \delta)^{-2} 2^{-1}G( 2),
\end{align*}
for some $C \ge 0$.

If $B(s, \delta)$ does contain corners, we may as well assume that $s$ is a corner of opening $\alpha_i$, and (up to taking a smaller $\delta$) that the boundary of $\Omega$ in $B(s, \delta)$ is described by $\theta_\pm(r)$ satisfying \eqref{eq:prop-theta} as above.
To leading order in $\delta$, the variational problem in $B(s, \delta)$ is equivalent to the problem on the circular sector 
$$C_{\alpha_i}= \{(r,\phi)| \phi \in (0,\alpha_i)\}.$$
Indeed, for $v \in H^1_0(B(s,\delta))$, we define $f \in H^1(C_{\alpha_i})$ by
\[
f(r,\phi) = v(r,\theta(r,\phi)), 
\quad \theta(r, \phi) = \frac{\theta_+(r) - \theta_-(r)}{\alpha_i} \phi + \theta_-(r). 
\]
This change of variables maps $C_{\alpha_i} \cap B(s,\delta)$ to $\Omega \cap B(s,\delta)$ and has Jacobian
\[
J(r) = \abs{\frac{\partial \phi}{\partial \theta}} = \frac{\theta_+(r) - \theta_-(r) }{\alpha_i} = 1 + O(r).
\]
For the gradient terms, we use \eqref{eq:prop-theta} again to bound
\begin{align*}
\abs{\nabla f}^2(r, \phi) &= \left( \partial_r v(r,\theta ) + \frac{\partial \theta}{\partial \phi}\partial_\theta v (r,\theta ) \right)^2 
+ \frac{1}{r^2}(\partial_\theta v)^2 \left(\frac{\theta_+(r) - \theta_-(r) }{\alpha_i}  \right)^2 \\
& \le \left( \partial_r v(r,\theta )  \right)^2 \left( 1 + r \frac{\partial \theta}{\partial \phi} \right) +  \frac{1}{r^2}(\partial_\theta v)^2 \left(1 + O(r)  +r \frac{\partial \theta}{\partial \phi}  + r^2 \left(\frac{\partial \theta}{\partial \phi} \right)^2\right) \\
& \le \abs{\nabla v}^2(r, \theta) (1 + O(r)),
\end{align*}
so we find
\[
\inf_{v \in H^1_0(B(s,\delta))} \frac{\int_\Omega \abs{\nabla v}^2 \int_{\Omega} v^2}{\int_\Omega v^4} \ge (1 - C \delta)\inf_{f \in H^1_0(B(s,\delta))} \frac{\int_{C_{\alpha_i}} \abs{\nabla f}^2 \int_{C_{\alpha_i}} f^2}{\int_{C_{\alpha_i}} f^4}. 
\]
Finally, within each cone $C_{\alpha_i}$, we use spherically decreasing rearrangements to show that it is equivalent to minimize over radial functions $f$. Minimization over radial problems is identical in $C_{\alpha_i}$ and in $\R^2$, so we find
\[
\inf_{f \in H^1_0(B(s,\delta))} \frac{\int_{C_{\alpha_i}} \abs{\nabla f}^2 \int_{C_{\alpha_i}} f^2}{\int_{C_{\alpha_i}} f^4} \ge \frac{\alpha_i}{2 \pi} G(2).
\]
Putting everything together, we have obtained
\begin{align*}
 G(\Omega, 2)& \ge
\limsup_{\delta \to 0}  G_{2}(\Omega, 2, \delta) \\
&\ge \limsup_{\delta \to 0} 
\frac{1}{2 \pi}\min\left(\pi, \alpha_1, \cdots, \alpha_N \right) G(2) (1 - C \delta) \\
&= \frac{1}{2 \pi}\min\left(\pi, \alpha_1, \cdots, \alpha_N \right) G(2),
\end{align*}
if a minimizer does not exist.
\end{proof}

\section{Smooth domains} \label{sec:smooth}
\noindent In this section, we prove
\begin{theorem} \label{thm:existence-smooth}
 Let $\Omega \subset \R^d$ be a $C^3$-domain and $d \ge 2$. Then a minimizer for $G(\Omega, d)$ given by \eqref{eq:var-prob} exists and 
 \[
G (\Omega, d) < G(\R^{d}_+, d) = G(d)/2^{2/d},
 \]
 where $\R^d_+$ is the halfspace $\R_+ \times \R^{d-1}$.
\end{theorem}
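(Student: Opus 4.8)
The plan is to invoke Theorem~\ref{thm:treshold-smooth}, which already gives $G(\Omega,d)\le G(d)/2^{2/d}$ and produces a minimizer as soon as this inequality is strict; the equality $G(\R^d_+,d)=G(d)/2^{2/d}$ is the standard even-reflection identity across the boundary hyperplane. So everything reduces to the strict inequality $G(\Omega,d)<G(d)/2^{2/d}$, which I would establish by exhibiting, for a well-chosen boundary point $s$ and $\lambda$ large, a competitor whose Rayleigh quotient lies strictly below $G(d)/2^{2/d}$. This is the Brezis--Nirenberg device of concentrating the whole-space optimizer, now carried out at a boundary point, with the curvature of $\partial\Omega$ at $s$ governing the next-to-leading order.

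I would first choose the concentration point. Let $\overline{B}(c,R)$ be the smallest closed ball containing $\overline{\Omega}$, and pick $s\in\partial\Omega\cap\partial B(c,R)$ (nonempty by compactness). Near $s$, $\Omega$ lies inside the ball and is internally tangent to the sphere, so writing $\partial\Omega$ locally as the graph $x_1=h(x_t)$ over its tangent plane --- coordinates centered at $s$, inward normal $e_1$, $h\in C^3$ with $h(0)=0$, $\nabla h(0)=0$ --- comparing the graphs of $\partial\Omega$ and $\partial B(c,R)$ at $s$ gives $D^2 h(0)\ge R^{-1}I$, hence the curvature trace $\mathcal{H}:=\operatorname{tr} D^2 h(0)\ge(d-1)/R>0$. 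Here the hypothesis $d\ge 2$ enters: if $d=1$ there is no transverse variable $x_t$ and this mechanism is empty, consistently with the non-existence result in $d=1$ of \cite{BeVaVdb018}.

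Next I would set up the competitor. Let $g$ be a minimizer of \eqref{GNSRDIntro} centered at the origin (positive, radial, smooth, exponentially decaying), let $\psi$ be a cut-off equal to $1$ near $0$ with small support, put $f_\lambda(y)=\lambda^{d/2}g(\lambda y)\psi(y)$ on $\R^d_+$, and define $u_\lambda(x)=f_\lambda(x_1-h(x_t),x_t)\in H^1(\Omega)$. Building $u_\lambda$ through the straightening $y\mapsto(y_1+h(y_t),y_t)$, whose Jacobian is $\equiv 1$, is what makes the computation clean: the $L^2$ and $L^{2+4/d}$ integrals of $u_\lambda$ equal those of $f_\lambda$ exactly, hence (up to exponentially small cut-off errors) half of the corresponding whole-space integrals, with no curvature correction; likewise $\overline{u_\lambda}=O(\lambda^{-d/2})$, so by convexity of $t\mapsto t^{2+4/d}$ subtracting the mean costs only $o(\lambda)$ in the denominator. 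All of the geometry is thus pushed into $\int_\Omega|\nabla u_\lambda|^2$, which in the $y$-variables reads
\[
\int_{\R^d_+}\Bigl[(1+|\nabla_t h|^2)(\partial_{y_1}f_\lambda)^2+|\nabla_t f_\lambda|^2-2\,\partial_{y_1}f_\lambda\,\nabla_t h\cdot\nabla_t f_\lambda\Bigr]\,dy.
\]

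The heart of the argument --- and the step I expect to be the main obstacle --- is the expansion of this Dirichlet energy to order $\lambda$ with the correct sign. The diagonal term reproduces $\tfrac{\lambda^2}{2}\int_{\R^d}|\nabla g|^2$; a crude estimate of the remainder only controls it by $O(\lambda)$ with the wrong sign, so one must actually evaluate the cross term. Using $\nabla_t h(y_t)=D^2h(0)\,y_t+O(|y_t|^2)$, rescaling $z=\lambda y$, and exploiting the radial form $g=G(|\cdot|)$ (so that the integrand becomes $G'(|z|)^2|z|^{-2}z_1\langle D^2h(0)\,z_t,z_t\rangle$), an angular average in $z_t$ turns the cross term into $-\tfrac{2\lambda}{d-1}\mathcal{H}\, C_3+O(1)$ with $C_3=\int_{\{z_1>0\}}G'(|z|)^2|z|^{-2}z_1|z_t|^2\,dz>0$; the $|\nabla_t h|^2$ term and the other remainders are $O(1)$, where $C^3$-regularity of $h$ is convenient for the bookkeeping. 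Assembling the three expansions in \eqref{eq:var-prob} and using $\int_{\R^d}|\nabla g|^2\,(\int_{\R^d}g^2)^{2/d}/\int_{\R^d}g^{2+4/d}=G(d)$ yields
\[
G(\Omega,d)\le \frac{1}{2^{2/d}}\,G(d)\Bigl(1-\frac{4\,C_3\,\mathcal{H}}{(d-1)\,\lambda\int_{\R^d}|\nabla g|^2}+o(\lambda^{-1})\Bigr),
\]
and since $\mathcal{H}>0$ and $C_3>0$, the parenthesis is $<1$ for $\lambda$ large. This gives the strict inequality, hence the existence of a minimizer by Theorem~\ref{thm:treshold-smooth}.
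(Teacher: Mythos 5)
Your proof is correct, and it takes a genuinely different route through the curvature expansion. The paper first selects a boundary point via a differential-geometric lemma (every $C^3$-domain has a boundary point with nonnegative principal curvatures and positive mean curvature), then concentrates $f_\epsilon(x)=f(x/\epsilon)$ directly in $\Omega$ and evaluates $\int_{\Omega_\epsilon}g$ by an angular-integral expansion \eqref{eq:angular-integral}. This produces an $O(\epsilon)$ curvature correction in \emph{all three} integrals of the quotient, and the sign of the combined first-order coefficient is not obvious; the paper has to invoke the Pohozaev/virial identities \eqref{eq:virial_f} and the $\abs{x}f$-weighted identity to show the bracket is positive. Your construction differs in two ways that together avoid that step. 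First, choosing the concentration point on the smallest enclosing sphere is a more elementary replacement for Lemma~\ref{lem:convexity} and gives the quantitative bound $D^2h(0)\ge R^{-1}I$ for free. Second, and more importantly, defining the competitor through the unit-Jacobian straightening $y\mapsto(y_1+h(y_t),y_t)$ makes the $L^2$, $L^{2+4/d}$, and $L^{1}$ integrals of $u_\lambda$ \emph{exactly} equal to the corresponding half-space integrals of $f_\lambda$, so the entire curvature correction is concentrated in the single cross term $-2\int \partial_{y_1}f_\lambda\,\nabla_t h\cdot\nabla_t f_\lambda$ of the Dirichlet energy, whose sign is manifest: $\partial_{y_1}f_\lambda<0$ on $\{y_1>0\}$, $\nabla_t f_\lambda\propto -y_t$, and $\nabla_t h\cdot y_t>0$ since $D^2h(0)>0$, so the cross term is a genuine $-c\lambda$ with $c>0$. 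The angular average and the constant $C_3$ check out (with $\operatorname{tr}D^2h(0)=\mathcal H>0$ this gives $-\tfrac{2\lambda}{d-1}\mathcal H C_3+O(1)$), the $|\nabla_t h|^2$ term and Taylor remainder are $O(1)$ provided $h\in C^3$, and the subtraction of the mean in the denominator contributes only $O(\lambda^{2-d})=O(1)$ for $d\ge 2$, so the quotient is $\tfrac{G(d)}{2^{2/d}}\bigl(1-c'\lambda^{-1}+O(\lambda^{-2})\bigr)<G(d)/2^{2/d}$ for $\lambda$ large. In short, your straightening trades the paper's virial computation for a one-term gradient expansion whose sign is immediate; both paths are sound, yours is arguably the cleaner one.
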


By Theorem~\ref{thm:treshold-smooth}, for this, it is sufficient to construct a competitor that makes the quotient in \eqref{eq:var-prob} smaller than $2^{-2/d}G(d)$.
It turns out that this is always possible in the smooth case by concentrating the minimizer of the problem in $\R^d$ at a boundary point with positive mean curvature. We are grateful to Rupert Frank for pointing out this idea to us. 
By the previous argument, the leading order for a sequence of test functions concentrating at any boundary point will give $2^{-2/d}G(d)$. In order to capture the next-to-leading order, we need to assume some additional regularity of the boundary.
We need the following well-known result from differential geometry.

\begin{lemma} \label{lem:convexity}
 Let $\Omega \subset \R^d$, $d \ge 2$ be a bounded $C^3$-domain. Then $\partial \Omega$ has at least one point where all the principal curvatures are non-negative and the mean curvature is strictly positive.
\end{lemma}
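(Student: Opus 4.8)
The plan is to consider the smallest closed ball that contains the bounded domain $\overline{\Omega}$, and to show that any point where $\partial\Omega$ touches this ball from inside is a point of the desired type. More precisely, let $R = \inf\{\rho > 0 : \overline{\Omega} \subset \overline{B(x_0,\rho)} \text{ for some } x_0 \in \R^d\}$; since $\Omega$ is bounded this infimum is attained at some center $x_0$, and after a translation we may assume $x_0 = 0$. Because $\overline{\Omega}$ is compact and contained in $\overline{B(0,R)}$ but in no ball of smaller radius, there exists a point $p \in \partial\Omega$ with $|p| = R$. At such a point, $\overline{B(0,R)}$ is an exterior tangent ball: locally, $\Omega$ lies on the inner side of the sphere $\partial B(0,R)$.

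The key geometric step is the comparison of second fundamental forms. Writing $\partial\Omega$ near $p$ as a graph over its tangent hyperplane $T_p(\partial\Omega)$ — which coincides with the tangent hyperplane to $\partial B(0,R)$ at $p$ since both are orthogonal to $p$ — the graph function $h_\Omega$ of $\partial\Omega$ and the graph function $h_S$ of the sphere both vanish to first order at the origin of these coordinates, and the inclusion $\Omega \subset B(0,R)$ forces $h_\Omega \le h_S$ (with the inward normal convention, so that larger graph value means curving away from $\Omega$). Since both functions vanish together with their gradients at the contact point, the inequality on the functions yields the inequality $D^2 h_\Omega \le D^2 h_S$ on the Hessians at that point, hence an inequality between the shape operators, hence every principal curvature of $\partial\Omega$ at $p$ (with respect to the inward normal) is at least the corresponding principal curvature of the sphere, namely $1/R > 0$. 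In particular all principal curvatures at $p$ are $\ge 1/R > 0$ and the mean curvature at $p$ is $\ge 1/R > 0$, which is strictly stronger than what is claimed. The $C^3$ hypothesis (indeed $C^2$ would suffice here) guarantees that the principal curvatures and the second fundamental form are well-defined and that the graph representation is genuinely twice differentiable at $p$.

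I expect the main obstacle to be making the touching-point argument fully rigorous, in particular justifying that the contact point $p$ exists and that the pointwise Hessian comparison $D^2 h_\Omega(0) \le D^2 h_S(0)$ follows from $h_\Omega \le h_S$ near $0$ together with $h_\Omega(0) = h_S(0)$ and $\nabla h_\Omega(0) = \nabla h_S(0)$; this is a standard maximum-principle/Taylor-expansion fact (the function $h_S - h_\Omega \ge 0$ has a minimum at $0$, so its Hessian there is positive semidefinite), but it needs the sign conventions for the normal and for "curving toward $\Omega$" to be set up consistently so that "all principal curvatures non-negative" really corresponds to $D^2 h_\Omega \ge 0$. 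Alternatively — and this is the cleaner route for a one-line citation — one can simply invoke the standard fact from differential geometry that on a compact hypersurface bounding a domain there is always a point where the second fundamental form is positive definite (obtained exactly by this smallest enclosing sphere argument, or equivalently by maximizing $|x|^2$ restricted to $\partial\Omega$ and examining the constrained Hessian), which gives the statement immediately.
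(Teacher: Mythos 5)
Your proof is correct, and it is essentially the canonical argument for this fact. The paper itself does not supply a proof of Lemma~\ref{lem:convexity}; it simply labels it a ``well-known result from differential geometry,'' and the smallest-enclosing-sphere (equivalently, maximize $|x|^2$ on $\partial\Omega$ and examine the constrained Hessian) argument you give is exactly the standard proof one would cite. Your touching-point Hessian comparison is set up correctly: with the inward-normal convention, $h_\Omega \le h_S$ near the contact point together with equality of values and gradients there does give $D^2 h_\Omega(0) \le D^2 h_S(0)$ (since $h_S - h_\Omega$ has a local minimum at the origin), and the sphere's Hessian is $\frac{1}{R}\,\mathrm{Id}$, so every principal curvature is $\ge 1/R > 0$. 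As you note, $C^2$ suffices and you obtain the stronger conclusion that the second fundamental form is positive definite, not merely that the principal curvatures are nonnegative with positive mean curvature; this is fine and does not conflict with the paper's statement. The only thing worth flagging is that you should be explicit about the orientation convention for the second fundamental form, since ``nonnegative principal curvatures'' is meaningless without it; the paper implicitly uses the convention in which a ball has positive curvature, which is the one your argument delivers.
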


\begin{proof}[Proof of Theorem~\ref{thm:existence-smooth}]
Fix coordinates such that the origin coincides with a point of the boundary with non-negative curvatures given by Lemma~\ref{lem:convexity} and rotate the axis such that the outward normal at the origin coincides with $-e_1$.

Let $f$ be the radially decreasing minimizer of \eqref{GNSRDIntro} scaled to satisfy
\begin{equation}
\label{eq:EL-eq-Rd}
-\Delta f + f - f \abs{f}^{4/d} = 0.
\end{equation}
We define $f_\epsilon (x) = f(x/ \epsilon)$, with $\epsilon \le 1$. Our goal is to show that
\begin{align*}
\frac{\int_\Omega\abs{ \nabla f_\epsilon}^2 \left( \int_\Omega f_\epsilon^2 \right)^{2/d}}{\int_{\Omega} \abs{f_\epsilon - \overline{ f_\epsilon}}^{2+4/d}} 
\le \frac{G(d)}{2^{2/d}} \left( 1 - \epsilon C_\Omega \right) + O(\epsilon^{1+ \delta})
\end{align*}
for some $C_\Omega>0$ and $\delta > 0$. 
This implies that $G(\Omega, d) < {G(d)}/{2^{2/d}} $ and thus that a minimizer exists.

For some sufficiently small $R_1>0$, the surface $\partial \Omega \cap B(0,R_1)$ is the graph of a function of the form, 
\[
(x_1, x_t) \in \partial \Omega \cap B(0,R_1) \Rightarrow x_1 = \frac{1}{2}\inprodtwo{x_t}{ \bvec K x_t} + O(x_t^3)
\]
where $\bvec K$ is a matrix with the principal curvatures at the origin as eigenvalues. By assumption, $\bvec K$ is positive semidefinite, and at least one eigenvalue is positive. 

In addition, $f$ is exponentially decreasing. A standard application of the maximum principle gives that, for all $\mu < 1$, there exists $M > 0$ such that
\begin{align*}
f(x)+ \abs{\nabla f(x)} \le M e^{-\mu \abs{x}}, \text{ for all } \abs{x} \ge R_1.  
\end{align*}

Now we bound the quotient in \eqref{eq:var-prob}. For the $L^2$-norm we find
\begin{align*}
\int_\Omega f_\epsilon^2 = \int_{\Omega \cap B(0,R_1)} f_\epsilon^2 + \int_{\Omega \setminus B(0,R_1)}f_\epsilon^2 \le \epsilon^d \int_{\Omega_\epsilon} f^2 +  C e^{-\mu R_1 / \epsilon},
\end{align*}
where we have defined the scaled domain
\[
\Omega_\epsilon \equiv \epsilon^{-1} \left( \Omega \cap B(0,R_1) \right).
\]
Analogously, for the gradient term we have
\begin{align*}
\int_\Omega \abs{\nabla f_\epsilon} ^2 &= \int_{\Omega \cap B(0,R_1)} \abs{\nabla f_\epsilon} ^2 + \int_{\Omega \setminus B(0,R_1)} \abs{\nabla f_\epsilon} ^2 \\
&\le \epsilon^{d-2} \int_{\Omega_\epsilon} \abs{\nabla f} ^2 + C e^{-\mu R_1/ \epsilon}.
\end{align*}
For the average, we find
\[
\overline{ f_\epsilon} = \frac{\epsilon^d}{\abs{\Omega}} \int_{\epsilon x \in \Omega} f(x) \di x \le C \epsilon^d ,
\]
so we obtain
\begin{align*}
\int_{\Omega} \abs{f_\epsilon - \overline{ f_\epsilon}}^{2+4/d}
&\ge \int_{\Omega} f_\epsilon^{2+4/d} -(2+4/d) \overline{ f_\epsilon} \int_{\Omega} f_\epsilon^{1+4/d} \\
&\ge \epsilon^d\left(\int_{\Omega_\epsilon} f^{2+4/d} - C \epsilon^d  \right).
\end{align*}

Now we need to estimate integrals of positive radial functions over the domains $\Omega_\epsilon$. We will show below that for radial, nonnegative functions $g$,
\begin{equation} \label{eq:angular-integral}
\int_{\Omega_\epsilon} g(x) \di x = \frac{1}{2} \int_{\R^d} g(x) \di x - \epsilon \kappa C_d \int_{\R^d} g(x) \abs{x}\di x + R,
\end{equation}
with $C_d > 0$ depending only on the dimension and $\kappa>0$ the mean curvature at the origin. The error term can be bounded by
\[
\abs{R} \leq C \epsilon^2 \int_{B(0,R_1/\epsilon)} \left(\abs{g(x)} \abs{x}^2 + \abs{g'(x)} \abs{x}^3 \right) \di x + \int_{\R^d \setminus B(0, R_1/\epsilon) }\abs{g(x)} \di x.
\]

Assuming \eqref{eq:angular-integral} for the moment, we obtain that
\begin{align*}
&\frac{\int_\Omega\abs{ \nabla f_\epsilon}^2 \left( \int_\Omega f_\epsilon^2 \right)^{2/d}}{\int_{\Omega} \abs{f_\epsilon - \overline{ f_\epsilon}}^{2+4/d}} \\
&
 \le \frac{G(d)}{2^{2/d}} \left( 1 - 2 \epsilon C_\kappa \left(\frac{\int_{\R^d} {\abs{\nabla f}}^2 \abs{x} }{\int_{\R^d} {\abs{\nabla f}}^2} + \frac{2}{d} \frac{\int_{\R^d} f^2 \abs{x}} {\int_{\R^d} f^2}    -\frac{\int_{\R^d} f^{2+4/d}\abs{x}} {\int_{\R^d} f^{2+4/d} } \right)  \right)  \\
& \qquad + O(\epsilon^{d}, e^{-R_1/(2 \epsilon)}) .
\end{align*}

In order to compute the sign of the term of order $\epsilon$, we use the Euler-Lagrange equation for $f$. Multiplying \eqref{eq:EL-eq-Rd} by $f$ and integrating gives
\[
\int_{\R^d} \abs{\nabla f}^2 + \int_{\R^d} f^2  = \int_{\R^d} f^{2+4/d}.
\]
Taking the product with $x \cdot \nabla f$ gives after a few integrations by part
\[
(1-d/2 )\int_{\R^d} \abs{\nabla f}^2 - \frac{d}{2} \int_{\R^d} f^2  =  \frac{-d}{2 + 4/d}\int_{\R^d} f^{2+4/d}.
\]
Working out the system finally gives 
\begin{equation} \label{eq:virial_f}
\int_{\R^d} f^{2+4/d} = \frac{d+2}{d} \int_{\R^d} \abs{\nabla f}^2  = \frac{d+2}{2} \int_{\R^d} f^2.
\end{equation}
On the other hand, multiplying the equation by $\abs{x} f$ and integrating gives
\begin{align*}
\int_{\R^d} \abs{x} f^{2+4/d}& -  \int_{\R^d} \abs{x} f^{2} 
= \int_{\R^d} \abs{x} f (- \Delta) f\\
&= \int_{\R^d} \abs{x} \abs{\nabla f}^2  + \frac{1}{2} \int \frac{x}{\abs{x}} \cdot \nabla f^2 \\
&= \int_{\R^d} \abs{x} \abs{\nabla f}^2 + \frac{1}{2}\lim_{r \to 0}\left( -\int_{\partial B_r} f^2 - (d-1)\int_{\R^d \setminus B_r} \frac{f^2}{\abs{x}} \right)\\
&= \int_{\R^d} \abs{x} \abs{\nabla f}^2  - \frac{d-1}{2} \int_{\R^d } \frac{f^2}{\abs{x}}.
\end{align*}
Here, the boundary term vanishes in the limit since $f \in H^1(\R^d)$ implies $f \in L^q(\partial B_r)$ for $q = \frac{2 d }{d-1}>2$.
Inserting this identity together with \eqref{eq:virial_f}, we find that
\begin{align*}
&\frac{\int_{\R^d} {\abs{\nabla f}}^2 \abs{x} }{\int_{\R^d} {\abs{\nabla f}}^2} + \frac{2}{d} \frac{\int_{\R^d} f^2 \abs{x}} {\int_{\R^d} f^2}    -\frac{\int_{\R^d} f^{2+4/d}\abs{x}} {\int_{\R^d} f^{2+4/d} } \\
& = \frac{1}{\int_{\R^d}f^{2+4/d }} \left( \frac{d+2}{d} \int_{\R^d} {\abs{\nabla f}}^2 \abs{x} + \frac{d+2}{2} \,\frac{2}{d} \int_{\R^d} f^2 \abs{x} - \int_{\R^d} f^{2+4/d}\abs{x} \right) \\
& = \frac{1}{\int_{\R^d}f^{2+4/d }} 
\left( \left( \frac{d+2}{d} -1 \right)\int_{\R^d} f^{2+4/d} \abs{x} 
+ \frac{(d+2)(d-1)}{2d}  \int_{\R^d} \frac{f^2}{\abs x}  
\right) \\
& > 0 .
\end{align*}

\begin{figure}
\includegraphics[scale=1]{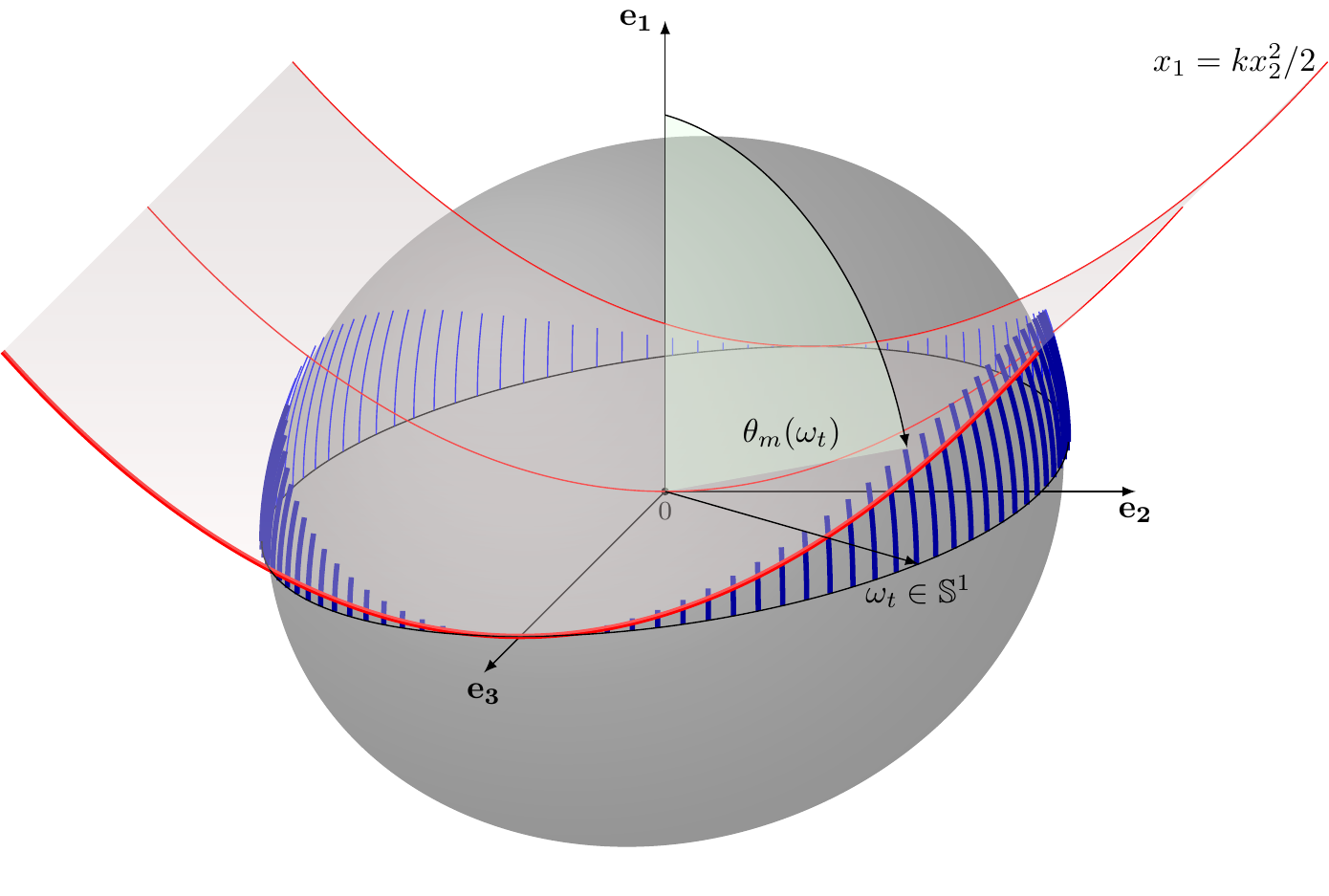}
\caption{Integration over the subset of a spherical shell between the equator and a parabolic surface. In the picture, the curvature matrix ${\bvec K}$ is $\operatorname{diag}(0, k)$, and $r\epsilon = 1$.\label{fig:3d}}
\end{figure}
Now we only have to prove \eqref{eq:angular-integral}. For simplicity of notation we assume that $g$ is supported in $B(0, R_1/ \epsilon)$.
The boundary of $\Omega_\epsilon$ is the graph of
\[
x_1 = \frac{\epsilon}{2} \langle x_t, \bvec{K} x_t\rangle + \widetilde{h}(x_t), \quad \abs{\widetilde{h}(x_t)} \le C \epsilon^2 \abs{x_t}^3.
\] 
We first show that we can replace $\Omega_\epsilon$ by the \emph{parabolic} region
\[
P_\epsilon = \{(x_1, x_t) \in \R^d  | \, x_t \in B(0,R_1/\epsilon), \, x_1 >\frac{\epsilon}{2} \langle x_t, K x_t\rangle \}.
\]
To this end, we define $\widetilde{g}$ on $P_\epsilon$ by
\[
\widetilde{g}(x_1,x_t) = g(x_1 +\widetilde{h}(x_t), x_t).
\]
This change of variables has unit Jacobian. On the other hand,
\begin{align*}
\int_{\Omega_\epsilon }g(x_1, x_t) \di x_1 \di x_t 
&= \int_{P_\epsilon} \widetilde{g}(x_1, x_t) \di x_1 \di x_t \\
&=  \int_{P_\epsilon} \left( g(x_1, x_t) + \int_{0}^{\widetilde h(x_t)}\partial_{1}g(x_1+ s,x_t) \di s \right) \di x_1 \di x_t,
\end{align*}
so 
\[
\left|\int_{\Omega_\epsilon} g -\int_{P_\epsilon} g \right| \le C\epsilon^2 \int_{P_\epsilon} \abs{g'}(x_1, x_t) \abs{x_t}^3 \di x_1 \di x_t.
\]
In spherical coordinates, we have
\begin{align*}
\int_{\R^d_+ \setminus P_\epsilon} g(x) \di x 
= \int_{r \ge 0} g(r) r^{d-1} \int_{\s_+^{d-1} \setminus r^{-1}P_\epsilon}  \di \omega \di r.
\end{align*}
We have to compute the leading order in $\epsilon$ of the angular integral. 
 Introducing (hyper)-spherical coordinates $x= r(\cos(\theta), \sin(\theta) \omega_t)$ with $\omega_t \in \s^{d-2}$, we have to compute the area of $\omega \in \s^{d-1}$ satisfying
\[
\cos(\theta) \le \frac{\epsilon r \sin^2(\theta)}{2} \langle \omega_t, \bvec{K} \omega_t\rangle .
\] 

We start by the easy case $d= 2$, where $\bvec{K} = \kappa >0$. We have to integrate over angles $\theta \in [0, \pi/2]$ satisfying the inequality
\[
\cos(\theta) \le \frac{\epsilon r \kappa}{2} \sin^2(\theta) = \frac{\epsilon r \kappa}{2} (1-\cos^2(\theta))  .
\]
Working out the quadratic equation gives
\[
\cos(\theta_m) = \frac{- 1 + \sqrt{1 + \epsilon^2 r^2 \kappa^2} }{ \epsilon r  \kappa} 
= \frac{\epsilon r  \kappa}{2}  + O(\epsilon r)^2.
\]
So 
\begin{align*}
\int_{\s_+^{2-1} \setminus r^{-1}P_\epsilon}  \di \omega = 2 \int_{\theta_m}^{\pi/2} \di \theta = 2 \frac{\epsilon r  \kappa}{2}  + O(\epsilon r)^2.
\end{align*}
We have obtained \eqref{eq:angular-integral} with $C_2 = 1/(2 \pi)$.

For $d \ge 3$, see Figure~\ref{fig:3d}, the range of $\theta$ depends on $\omega_t$ through an analogous equation and we find that $t_m \equiv \cos(\theta_m)$ is given by
\[
t_m(\omega_t) = \frac{-1 + \sqrt{1+ \bigl(\epsilon r \langle \omega_t, \bvec{K} \omega_t\rangle  \bigr)^2 }}{\epsilon r (\omega_t, \bvec{K} \omega_t)} = \frac{\epsilon r}{2} \langle \omega_t, \bvec{K} \omega_t\rangle + O(\epsilon^2 r^2).
\]
In coordinates with $\bvec{K} = \operatorname{diag}(k_1, \cdots, k_{d-1})$, we find
\begin{align*}
\int_{\s_+^{d-1} \setminus r^{-1}P_\epsilon} \di \omega  &= \int_{\s^{d-2}} \int_{0}^{t_m} (1-t^2)^{(d-3)/2} \di t \di \omega_t 
 \\
   &= \int_{\s^{d-2}} \frac{\epsilon r}{2} \langle \omega_t, \bvec{K} \omega_t\rangle  \di \omega_t + O(\epsilon^2 r^2) \\
&=\frac{\epsilon r}{2} \sum_{i=1}^{d-1} k_i \int_0^\pi \sin(\phi)^{d-3} \cos^2(\phi)  \di \phi \abs{\s^{d-3}} + O(\epsilon^2 r^2)\\
& = \epsilon r \kappa C_d \abs{\s^{d-1}}+ O(\epsilon^2 r^2),
\end{align*}
where $\kappa$ is the mean curvature of $\partial \Omega$ at the origin and 
\begin{align*}
C_d &= \frac{\abs{\s^{d-3}}}{2\abs{\s^{d-1}}} (d-1)  \int_0^\pi \sin(\phi)^{d-3} \cos^2(\phi)  \di \phi \\
& = \frac{(d-2)(d-1)}{4 \pi} B(d/2-1, 3/2) 
\end{align*}
is a constant depending only on the dimension. 
\end{proof}

\section{An example of non-existence} \label{sec:triangle}

Here, we prove that minimizers do not exist in the rectangular isosceles triangle in $\R^2$.
\begin{theorem}
 Let $\Omega \subset{\R^2}$ be the isosceles rectangular triangle. There exist no minimizers for \eqref{eq:var-prob} in $\Omega$
 and 
 \[
G(\Omega, 2) = G(2) /8\sim 0.732.
 \]
\end{theorem}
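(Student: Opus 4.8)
The plan is to feed the angles of $T$ into Theorem~\ref{thm:treshold-corners} and then use the reflection symmetry of $T$. Since the interior angles are $\tfrac\pi2,\tfrac\pi4,\tfrac\pi4$, Theorem~\ref{thm:treshold-corners} gives $G(T,2)\le G(2)/8$ immediately; moreover, reading the lower-bound part of that proof backwards, the reverse inequality $G(T,2)\ge G(2)/8$ — hence the equality and the numerical value $G(2)/8\approx 0.732$ via the known sharp constant $G(2)$ — follows the moment one knows that \emph{no minimizer exists}. So the whole problem reduces to ruling out a minimizer.

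Assume a minimizer $u$ exists; subtracting its mean and rescaling we may take $u_T=0$, $\int_Tu^2=1$, so that $\gamma:=G(T,2)=\int_T|\nabla u|^2\big/\int_Tu^4\le G(2)/8$. Let $R$ be the reflection across the bisector $A$ of the right angle, and $T_1,T_2$ the two halves of $T$ cut off by $A$; each is an isosceles right triangle similar to $T$, so by the scale- and isometry-invariance of the quotient in \eqref{eq:var-prob}, $G(T_i,2)=\gamma$. (Note already that a \emph{symmetric} minimizer is impossible: if $u$ is $R$-invariant then $u_{T_1}=u_{T_2}=0$, applying $G(T_1,2)=\gamma$ to $u|_{T_1}$ and doubling gives $\gamma=Q_T(u)\ge 2\gamma$, absurd.) Decompose $u=u_e+u_o$ into its $R$-even and $R$-odd parts. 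The quadratic integrals split orthogonally and, as the cross terms odd in $u_o$ integrate to zero, $\int_Tu^4=\int_Tu_e^4+6\int_Tu_e^2u_o^2+\int_Tu_o^4$. For the even part, $u_e$ is symmetric with $\int_{T_1}u_e=\int_{T_2}u_e=0$, so applying $G(T_1,2)=\gamma$ to $u_e|_{T_1}$ and doubling yields
\[
\int_T|\nabla u_e|^2\int_Tu_e^2\ \ge\ 2\gamma\int_Tu_e^4 .
\]
The same bound with $u_o$ in place of $u_e$ is the delicate point, since $u_o$ vanishes on $A$ but its means on the $T_i$ need not vanish. I would prove it by showing that the concentration threshold inside the \emph{antisymmetric} class equals $G(2)/4\ (\ge 2\gamma)$, rerunning the localisation of Lemma~\ref{lem:localizing} and the model-problem analysis of Theorem~\ref{thm:treshold-corners} for antisymmetric functions: concentration at a point off $A$ is paired with its mirror point (cost $\ge 2\cdot\tfrac{\pi/4}{2\pi}G(2)$), concentration on $A$ forces vanishing there, so the effective cone is reduced (cost $\ge \tfrac{\pi/2}{2\pi}G(2)$), and one must check that a non-concentrating antisymmetric minimising sequence — equivalently, a minimiser of the problem on $T_1$ with Dirichlet data on $A$ and free conditions elsewhere — cannot do better than $G(2)/8$.

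Granting both bounds $\int_T|\nabla u_\bullet|^2\int_Tu_\bullet^2\ge 2\gamma\int_Tu_\bullet^4$, insert them into $\int_T|\nabla u|^2=\gamma\int_Tu^4$ together with $\int_Tu_e^2+\int_Tu_o^2=1$ and $\int_Tu_e^2u_o^2\le(\int_Tu_e^4)^{1/2}(\int_Tu_o^4)^{1/2}$; Cauchy--Schwarz in the masses and AM--GM then force \emph{equality} everywhere, whence $\int_Tu_e^4=\int_Tu_o^4$ and $u_e^2=u_o^2$ a.e., i.e. $|u_e|=|u_o|$ pointwise. Then $u=2u_e$ on $S:=\{u_e=u_o\}$ and $u\equiv 0$ on $S':=\{u_e=-u_o\}=R(S)$. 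Equality in the even-part bound forces $u_e|_{T_1}$ to be a nontrivial (as $u\not\equiv0$) minimiser for $G(T_1,2)$, hence a smooth solution of its Euler--Lagrange equation whose zero set is null; so $|S'|=|T|/2>0$ and $u$ vanishes on a set of positive measure. But $u$ solves the Euler--Lagrange equation $-\Delta u=Au^3+Bu+C$ ($A>0$) with Neumann conditions and is interior-smooth, so at a density point $x_0$ of $\{u=0\}$ all derivatives of $u$ vanish; this forces $C=0$ and then, by strong unique continuation applied to $-\Delta u=(Au^2+B)u$, $u\equiv 0$, contradicting $\int_Tu^2=1$. Hence no minimiser exists, $G(T,2)=G(2)/8$, and the stated numerical value follows. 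The principal obstacle is exactly the antisymmetric-threshold estimate for $u_o$: the even part reduces cleanly to a subtriangle by scaling, whereas the odd part vanishes only on $A$ without control on its subtriangle means, so one must redo the concentration-compactness analysis of Section~\ref{sec:treshold} in the symmetric setting and, in particular, rule out a non-concentrating antisymmetric minimising sequence.
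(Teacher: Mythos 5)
Your decomposition into $R$-even and $R$-odd parts, the bound for the even part via the zero-mean restriction to the sub-triangle $T_1$, and the algebra forcing $\abs{u_e}=\abs{u_o}$ a.e.\ and $\int_T u_e^2 = \int_T u_o^2 = 1/2$ are all correct and match the paper's $\alpha=\beta=\gamma=1/2$ conclusion, derived there from the first-order optimality condition instead of the minimum-value identity you use. The genuine gap is exactly where you flag it: the bound for the odd part is left unproved, and the route you sketch (re-running the concentration-compactness machinery within the antisymmetric class) is both heavier than needed and, as you say, leaves open the possibility of a non-concentrating antisymmetric minimising sequence for a mixed Dirichlet--Neumann problem on $T_1$, for which you have no control.

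The paper closes this with a short reflection trick that actually yields a strictly stronger bound. Since $u_o$ vanishes on the bisector $A$, its restriction to $T_1$ has a Dirichlet side along $A$ and Neumann sides along the two remaining sides of $T_1$. Eight copies of $u_o|_{T_1}$, obtained by odd reflection across $A$ and even reflections across the Neumann sides, tile a square and fit together into a mean-zero function $v\in H^1(\R^2)$ with compact support. Each integral simply multiplies by the number of copies, so
\[
\frac{\int_T\abs{\nabla u_o}^2\,\int_T u_o^2}{\int_T u_o^4}\;\ge\;\frac{G(2)}{4}\;\ge\;2\,G(T,2),
\]
and the first inequality is \emph{strict} because minimisers of $G(2)$ are not compactly supported. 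That strictness is also the paper's endgame: once $\alpha=\beta=1/2$ is established, one reads off $G(T,2)=\tfrac12\int_T\abs{\nabla u_o}^2\int_T u_o^2/\int_T u_o^4 > G(2)/8$, which contradicts Theorem~\ref{thm:treshold-corners} directly, with no need for the Euler--Lagrange equation, the Lagrange multiplier, or unique continuation. Your unique-continuation alternative would work in principle (vanishing to infinite order at an interior density point kills the constant Lagrange multiplier, and Aronszajn's theorem then gives $u\equiv 0$), but it invokes much heavier machinery than the one-line strictness argument; more importantly, it still rests on the odd-part bound, so the reflection argument is the step you cannot do without.
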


\begin{proof}
 Let $u$ be a minimizer for \eqref{eq:var-prob} with zero average. We write $u= u_S + u_A$, where $u_S$ and $u_A$ are the symmetric and anti-symmetric parts of $u$ with respect to reflection across the diagonal. 
Note that the anti-symmetric part $u_A$ is zero on the diagonal and has zero average by definition. This means that $u_S$ has zero average as well on each of two isosceles triangles separated by the diagonal.
We claim that
\begin{align*}
 \frac{\int_{\Omega} \abs{\nabla u_A}^2 \int_{\Omega} u_A^2  }{\int_{\Omega} u_A^4  } \ge \frac{G(2)}{4}, \qquad
  \frac{\int_{\Omega} \abs{\nabla u_S}^2 \int_{\Omega} u_S^2  }{\int_{\Omega} u_S^4  } \ge 2 \, G(\Omega, 2). \numberthis \label{eq:ineq-sym-asym}
\end{align*}
Indeed, the first inequality follows from constructing a competitor for the problem in  $\R^2$ from $8$ copies of the restriction of $u_A$ to one of the smaller triangles.
The second inequality follows since $u_S$ has zero average on each of the two triangles separated by the diagonal. These triangles are just scalings of $\Omega$ by $1/\sqrt{2}$. Since $G(\Omega, 2)$ is invariant under dilations of the domain $\Omega$, this gives the inequality. Also note that 
\[
\frac{G(2)}{4} \ge 2 \, G(\Omega, 2)
\]
by Theorem~\ref{thm:treshold-corners} with $\alpha_i = \pi/4$.

We define
\[
\alpha = \frac{\int_{\Omega} \abs{\nabla u_A}^2}{\int_{\Omega}\abs{\nabla u}^2}, \quad
\beta =   \frac{\int_{\Omega} u_A^2}{\int_{\Omega}u^2}, \quad
\gamma=  \frac{\int_{\Omega} u_A^4 + 3\int_{\Omega} u_A^2 u_S^2 }{\int_{\Omega}u^4}.
\]
By definition, these three numbers lie in $[0,1]$ and if one of them is equal to $0$ or $1$, they all are. 

We define a competitor $u_\lambda = u + \lambda u_A$ and compute
\begin{align*}
\frac{d}{d \lambda } \left. \frac{\int_{\Omega} \abs{\nabla u_\lambda}^2 \int_{\Omega} u_\lambda^2  }{\int_{\Omega} u_\lambda^4  } \right|_{\lambda = 1}
= \frac{\int_{\Omega} \abs{\nabla u}^2 \int_{\Omega} u^2  }{\int_{\Omega} u^4  }  
\left(2 \alpha + 2 \beta - 4 \gamma \right).
\end{align*} 
From the minimality of $u$, we obtain $\gamma = (\alpha+ \beta)/ 2 $.
On the other hand, with the definition
\[
\zeta = \frac{\int_{\Omega} u_A^2 u_S^2}{\int_{\Omega}u^4},
\]
we rewrite
\begin{align*}
G(\Omega, 2) &= \frac{\int_{\Omega} \abs{\nabla u}^2 \int_{\Omega} u^2  }{\int_{\Omega} u^4  } \\
 &=
\frac{\gamma - 3 \zeta}{\alpha \beta}\frac{\int_{\Omega} \abs{\nabla u_A}^2 \int_{\Omega} u_A^2  }{\int_{\Omega} u_A^4  } \\
&\ge 
\frac{\gamma - 3 \zeta }{\alpha \beta} 2 G(\Omega,2) \\
& \ge  2 G(\Omega,2) \frac{\frac{\alpha + \beta}{2} - \frac{3}{8}}{ \alpha \beta}, \numberthis\label{eq:CS}
\end{align*} 
where the last line uses $\zeta \le 1/8$, as follows from the Cauchy-Schwarz inequality
\[
2 \int_{\Omega} u_A^2 u_S^2 \le  \int_{\Omega} u_A^4 + \int_{\Omega} u_S^4 = \int_{\Omega} u^4 - 6 \int_{\Omega} u_A^2 u_S^2.
\]
We have found that 
\[
\alpha + \beta - 3/4 \le \alpha \beta. 
\]
Interchanging the roles of $u_S$ and $u_A$, we also obtain that 
\[
(1- \alpha) + (1- \beta) - 3/4 \le (1-\alpha)(1 -\beta ).
 \]
These inequalities imply that $\alpha = \beta = \gamma = 1/2$.
Indeed, the region in the $(\alpha, \beta)$-plane defined by the first inequality touches the diagonal $\alpha + \beta = 1$ only at $\alpha = \beta$, and is otherwise contained in $\alpha + \beta < 1$. The second inequality defines the reflection of the first region across the diagonal $\alpha + \beta = 1$, so the only point of intersection is precisely the center of the unit square.

Having established this, we return to $\eqref{eq:ineq-sym-asym}$.
Since minimizers for the problem in $\R^d$ are not compactly supported, we actually have a strict inequality for $u_A$, so using $\alpha = \beta= 1/2$ in \eqref{eq:CS}, we find
\begin{align*}
G(\Omega,2) =  \frac{1}{2}\frac{\int_{\Omega} \abs{\nabla u_A}^2 \int_{\Omega} u_A^2  }{\int_{\Omega} u_A^4  } >  G(2)/ 8,
\end{align*}
contradicting Theorem~\ref{thm:treshold-corners}.
\end{proof}

\section{Some other existence results.} \label{sec:rectangles}
In this section, we group two results about the existence of minimizers for non-smooth domains which are corollary of Theorem~\ref{thm:treshold-corners} or its analogue for hypercubes, see \cite[Theorem 4.3]{BeVaVdb018}.

\begin{proposition}
Let $\Omega_b \subset \R^2$ be the rectangle $[0,b^{-1}] \times [0,b]$.
There exist $b_c$ satisfying $b_c \le 2.12$ such that, 
For all $b > b_c$, minimizers for \eqref{eq:var-prob} exist. 
\end{proposition}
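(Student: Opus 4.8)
The plan is to deduce this directly from Theorem~\ref{thm:treshold-corners}. The rectangle $\Omega_b$ is a piecewise $C^2$ planar domain with exactly four corners, each of interior angle $\pi/2$, and with no self-intersections, so that $\frac{1}{2\pi}\min(\pi,\alpha_1,\dots,\alpha_4)=\frac14$ and the theorem gives
\[
G(\Omega_b,2)\le \frac{G(2)}{4},
\]
with a minimizer guaranteed as soon as this inequality is strict. Thus the whole task is to exhibit, for $b$ large enough, a single competitor $u\in H^1(\Omega_b)$ whose quotient in \eqref{eq:var-prob} is strictly below $G(2)/4$.

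For a rectangle that is thin in $x$ and long in $y$, the efficient competitors depend on $y$ only. I would take $u(x,y)=w(y/b)$ with $w\in H^1([0,1])$; carrying out the ($x$-independent) integrations and substituting $t=y/b$ collapses the quotient to the one-dimensional, \emph{subcritical} ($p=1$) quotient on the unit interval, up to a factor $b^{-2}$:
\[
\frac{\int_{\Omega_b}\abs{\nabla u}^2\int_{\Omega_b}u^2}{\int_{\Omega_b}\abs{u-u_{\Omega_b}}^4}
=\frac{1}{b^2}\,\frac{\int_0^1\abs{w'}^2\int_0^1 w^2}{\int_0^1\abs{w-\bar w}^4},\qquad \bar w=\int_0^1 w .
\]
The cleanest choice is the first nonconstant Neumann eigenfunction $w(t)=\cos(\pi t)$, which has zero mean; the remaining one-dimensional integrals are elementary and yield
\[
G(\Omega_b,2)\le \frac{2\pi^2}{3b^2}.
\]

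It follows that $G(\Omega_b,2)<G(2)/4$ whenever $b^2>8\pi^2/(3\,G(2))$, so the proposition holds with
\[
b_c=\Bigl(\frac{8\pi^2}{3\,G(2)}\Bigr)^{1/2}.
\]
To verify the numerical claim $b_c\le 2.12$, I would use the known value of the sharp Gagliardo--Nirenberg constant: the virial identities \eqref{eq:virial_f} give $G(2)=\tfrac12\norm{Q}_{L^2(\R^2)}^2$ for the ground state $Q$ of $-\Delta Q+Q=Q^3$ in $\R^2$, equivalently $G(2)=8\,G(\Omega_{\triangle},2)$ for the isosceles rectangular triangle treated in Section~\ref{sec:triangle}; inserting $G(2)\approx 5.856$ gives $b_c\le 2.12$. (A marginally sharper threshold could be obtained by minimizing the one-dimensional quotient over profiles $w$ rather than fixing the eigenfunction, but this refinement is not needed.)

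The proof has essentially no hard step: all the analytic difficulty — loss of compactness and boundary/corner regularity — is already absorbed into Theorem~\ref{thm:treshold-corners}, and what remains is a one-line reduction to a one-dimensional competitor plus an elementary computation. The only point requiring a bit of care is the bookkeeping of constants, so that the stated bound $b_c\le 2.12$ genuinely follows from the value used for $G(2)$.
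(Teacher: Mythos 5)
Your proposal is correct and follows exactly the paper's route: reduce to the corner threshold $G(2)/4$ from Theorem~\ref{thm:treshold-corners}, take a one-dimensional competitor $u(x,y)=w(y/b)$, and compute the quotient with $w(t)=\cos(\pi t)$ to get $G(\Omega_b,2)\le 2\pi^2/(3b^2)$, whence $b_c\le\bigl(8\pi^2/(3\,G(2))\bigr)^{1/2}\approx 2.12$. The only cosmetic difference is your derivation $G(2)=\tfrac12\norm{Q}_{L^2}^2$ from the virial identities to justify the numerical value; the paper simply cites the tabulated value $G(2)\sim 5.8545$ and also notes, as you do, that optimizing the one-dimensional profile would sharpen $b_c$ slightly.
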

\begin{proof}
Let $\phi:[0,1] \to \R$ be a smooth function with zero average. 
As a test function for \eqref{eq:var-prob}, we take $u(x_1,x_2)= \phi(x_2/ b)$
and compute
\begin{align*}
\int_{\Omega_b} \abs{u}^p =  \int_0^1 \abs{\phi}^p, \quad \int_{\Omega_b} \abs{\nabla u}^2 =  b^{-2}\int_0^1 (\phi')^2,
\end{align*}
so 
\[
\frac{\int_{\Omega_b} \abs{\nabla u}^2  \int_{\Omega_b} u^2   }{\int_{\Omega_b} {u^4   }} = b^{-2} \frac{\int_0^1 (\phi')^2 \int_0^1 \phi^2}{\int_0^1 \phi^4},
\]
which is smaller than $G(2)/4$ for sufficiently large $b$.
Taking $\phi(x) = \cos(\pi x)$ gives
\[
\frac{\int_0^1 (\phi')^2 \int_0^1 \phi^2}{\int_0^1 \phi^4} = \pi^2 \left(\frac{1}{2}\right)^2 \frac{8}{3} = \frac{2 \pi^2}{3}.
\]
Therefore
\[
G(\Omega_b,2) < G(2)/4 
\]
as soon as 
\[
b > \left(\frac{2 \pi^2/3}{G(2)/4} \right)^{1/2} \sim 2.12\, ,
\]
where we have used the numerical value $G(2) \sim 5.8545$ (see \cite[Table 1]{BeVaVdb018}).
Numerically optimizing over anti-symmetric functions gives that there exists $\phi$ with zero average satisfying
\[
\frac{\int_0^1 ({\phi'})^2 \int_0^1 \phi^2}{\int_0^1 \phi^4} \le 6.1622 \, ,
\]
which gives a slightly better bound for $b_c$.
\end{proof}

\begin{proposition}
Let $\Omega_d$ be the $d$-dimensional hypercube. Minimizers for \eqref{eq:var-prob} exist for $d \ge 10$.
\end{proposition}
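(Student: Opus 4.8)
The plan is to follow the same scheme as in the rectangle case treated just above, now invoking the hypercube analogue of Theorem~\ref{thm:treshold-corners}, that is \cite[Theorem 4.3]{BeVaVdb018}. For the unit cube $\Omega_d$, loss of compactness of a minimizing sequence can only occur through concentration at a $k$-codimensional face; since the faces of a cube are flat, reflecting across the $k$ coordinate hyperplanes bounding such a face identifies the corresponding model problem with the one on $\R^d$ up to an overall factor, and yields the model constant $2^{-2k/d}G(d)$. The smallest of these, attained at a vertex ($k=d$), is $G(d)/4$. Hence $G(\Omega_d,d)\le G(d)/4$, and a minimizer exists as soon as this inequality is strict. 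So it suffices to exhibit, for each $d\ge 10$, a competitor $u$ with
\[
\frac{\int_{\Omega_d}\abs{\nabla u}^2\left(\int_{\Omega_d}u^2\right)^{2/d}}{\int_{\Omega_d}\abs{u}^{2+4/d}}<\frac{G(d)}{4}.
\]

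First I would take a one-dimensional competitor $u(x)=\phi(x_1)$ with $\phi\colon[0,1]\to\R$ of zero average (so that $u_{\Omega_d}=0$), the natural first choice being $\phi(x)=\cos(\pi x)$. All three integrals reduce to one-dimensional ones, and a short Beta-function computation gives
\[
\int_{\Omega_d}\abs{\nabla u}^2=\frac{\pi^2}{2},\qquad
\int_{\Omega_d}u^2=\frac12,\qquad
\int_{\Omega_d}\abs{u}^{2+4/d}=\frac{\Gamma\!\left(\tfrac32+\tfrac2d\right)}{\sqrt{\pi}\,\Gamma\!\left(2+\tfrac2d\right)},
\]
so the quotient equals an explicit dimension-dependent number $N_d=\tfrac12\pi^{5/2}\,2^{-2/d}\,\Gamma(2+\tfrac2d)/\Gamma(\tfrac32+\tfrac2d)$, which increases to $\pi^2$ as $d\to\infty$ (numerically $N_{10}\approx 9.23$). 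The problem is thus reduced to verifying $N_d<G(d)/4$ for all $d\ge 10$; if necessary one also optimizes $\phi$ numerically over zero-average profiles, which, exactly as noted for the rectangle, lowers $N_d$ by a few percent.

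For a lower bound on $G(d)$ I would combine Hölder interpolation between $L^2$ and $L^{2d/(d-2)}$ with the sharp Sobolev inequality, obtaining $G(d)\ge\mathcal{S}_d$, where $\mathcal{S}_d=\pi d(d-2)\bigl(\Gamma(d/2)/\Gamma(d)\bigr)^{2/d}$ is the sharp Sobolev constant; by Stirling $\mathcal{S}_d=\tfrac{\pi e}{2}d\,(1+o(1))$. Since $4N_d$ increases to $4\pi^2$ while $\mathcal{S}_d$ grows linearly in $d$, one verifies — numerically for small dimensions and via the asymptotics for large ones — that $\mathcal{S}_d>4N_d$, hence $G(\Omega_d,d)<G(d)/4$, for every $d\ge 11$. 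For $d=10$ the crude bound $\mathcal{S}_{10}$ falls just short of $4N_{10}$, so there I would instead insert the numerical value of $G(10)$ tabulated in \cite{BeVaVdb018}, or use the numerically optimized one-dimensional competitor; either input suffices to conclude.

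I expect the heart of the argument — and its only delicate point — to be precisely this tightness at $d=10$. The leading behaviour of the one-dimensional competitor is governed by the first nonzero Neumann eigenvalue $\pi^2$ of the unit cube, a dimension-independent constant, whereas the compactness threshold $G(d)/4$ grows with $d$; the two cross over essentially at $d=10$, so there is little room to spare. Consequently one cannot avoid either a sufficiently sharp (Sobolev-type) lower bound on $G(d)$ for large $d$ together with an essentially numerical treatment of the endpoint $d=10$, or a near-optimal choice of the profile $\phi$. The remaining ingredients — the Beta-function evaluation of the integrals, the Stirling asymptotics of $\mathcal{S}_d$, and the justification that no self-intersections or curvature corrections enter for the cube — are routine.
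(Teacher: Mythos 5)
Your proposal is essentially the paper's own proof: the same competitor $u(x)=\cos(\pi x_1)$, the same Beta-function evaluation (your $N_d$ is the paper's $T(d)$), the same Sobolev lower bound $G(d)\ge S_d$ for the range $d\ge 11$ via the observation that $T(d)\le\pi^2$ while $S_d/4$ grows linearly, and the same fallback to the tabulated numerical value of $G(10)$ for the endpoint $d=10$. The only cosmetic difference is that you write the Sobolev constant in Talenti's Gamma-function form rather than via $\abs{\s^d}^{2/d}$; the two are equal.
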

\begin{proof}
Again, we take as a test function $u(x) = \cos(\pi x_1)$ and compute
\begin{align*}
T(d)=\frac{\int_{\Omega_d} \abs{\nabla u}^2  \left( \int_{\Omega_d} u^2 \right)^{2/d}   }{\int_{\Omega_d} {\abs{u}^{2+4/d}   }} 
&= \pi^2 \frac{\left( \int_0^1 \cos^2(\pi x) \di x \right)^{1+2/d}}{\int_0^1 \left(\cos^2(\pi x)\right)^{1+2/d} \di x} \\
&=\frac{\pi^2}{2^{1+2/d}}\, B(3/2+2/d\, , 1/2)^{-1}  
\end{align*}
where $B(x,y)=\Gamma(x)\, \Gamma(y)/\, \Gamma(x+y)$ is the Euler Beta function. From the first expression, we also see that $T(d) \le \pi^2$ for all $d$. On the other hand, for $d \ge 3$, we have (see e.g., \cite{Bec004} or \cite[Section 4.4]{Lun017})
\[
G(d) \ge S_d = \frac{d(d-2)}{4} \abs{\s^d}^{2/d},
\]
where $S_d$ is the sharp constant in the Sobolev inequality.

Thus, $G(\Omega,d) < G(d)/4$ if $d$ is such that
\[
T(d) \, < \frac{d(d-2)}{16} \abs{\s^d}^{2/d}\, .
\]
Since the right hand side of this inequality grows linearly for large $d$, this is always satisfied for $d$ sufficiently large.
Explicitly, we check that
\[
\frac{11 \times 9}{16}  \abs{\s^{11}}^{2/11} = 10.246 > 9.293 =  T(11) .
\]
Moreover, numerically we have (see \cite[Table 1]{BeVaVdb018}) that $T(10)\sim 9.233 < G(10)/\, 4\sim 9.536$, so minimizers exist for $d=10$.
For $d \ge 11$, we simply use the fact that $S_d$ is increasing with $d$ and
\[
T(d) \le \pi^2 < S_{11} \le S_d.
\]
\end{proof}
\appendix

\section{Criticality of $p= 2/d$}
In this appendix, we consider the generalized problem 
\begin{equation}
\label{eq:var_prob_p}
G_p(\Omega, d) = \inf_{H^1(\Omega)} \frac{\int_\Omega \abs{\nabla u}^2 \left(\int_\Omega u^2 \right)^{p}}{\int_\Omega \abs{u-u_\Omega}^{2+2p}}, \quad 0 \le p \le 2/(d-2) .
\end{equation}

The problem for $p = 2/d$ is critical in the following sense.

\begin{theorem}
Let $\Omega \subset \R^d$ be a bounded domain with locally Lipschitz boundary. For all $p\in [0,2/d)$ a minimizer for $G_p(\Omega, d)$ exists. For all $p \in (2/d, 2/(d-2)]$, $G_p(\Omega, d) =0$ and no minimizer exists.
\end{theorem}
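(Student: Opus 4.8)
The plan is to split into the subcritical and supercritical cases, both relying on comparing the scaling of the numerator and denominator under concentration of a test function.

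For the subcritical case $p \in [0, 2/d)$, I would show directly that loss of compactness cannot occur, so the direct method in the calculus of variations applies. Take a minimizing sequence $u_n$ normalized by $\int_\Omega u_n^2 = 1$, $\int_\Omega u_n = 0$. The key is to bound $\norm{u_n}_{L^{2+2p}}$ along a subsequence. Apply the localization Lemma~\ref{lem:localizing} with $v = u_n$, choosing $\delta = \delta_n \to 0$ and $\eta = \eta_n$ with $\eta_n/\delta_n \to 0$. The point is that the model problem constant $G_p(\Omega, d, \delta)$ no longer degenerates: on a ball $B(s,\delta)$, after straightening the boundary (as in the proof of Theorem~\ref{thm:treshold-smooth}, using only the locally Lipschitz assumption to get a bi-Lipschitz chart with constants $1 + o(1)$ — actually we can afford to lose a fixed multiplicative factor here), any $v \in H_0^1(B(s,\delta))$ extends by zero to $\R^d$, and the Gagliardo--Nirenberg--Sobolev inequality in the \emph{subcritical} range gives (by scaling) a lower bound $G_p(B(s,\delta), d) \ge c\, \delta^{-(2p d - 4p)/\dots}$ — more precisely, since $p < 2/d$, concentrating a fixed profile at scale $\delta$ makes the numerator-to-denominator ratio \emph{blow up} like a positive power of $\delta^{-1}$. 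Hence $G_p(\Omega, d, \delta_n) \to +\infty$, and \eqref{eq:model problem} forces $\int_\Omega |u_n|^{2+2p} \le C$, giving the needed bound. Then weak $H^1$ compactness, Rellich (the embedding $H^1(\Omega) \hookrightarrow L^{2+2p}(\Omega)$ is compact since $2 + 2p < 2^* = 2d/(d-2)$ strictly, using locally Lipschitz so that Rellich holds), and lower semicontinuity of the numerator finish the existence.

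For the supercritical case $p \in (2/d, 2/(d-2)]$, I would show $G_p(\Omega,d) = 0$ by exhibiting an explicit family of test functions. Fix an interior point $x_0 \in \Omega$ and let $g$ be a fixed smooth compactly supported function with $\int g \ne 0$... actually, let $g$ be the ground state (or any fixed nonnegative $H^1$ bump); set $u_\epsilon(x) = \epsilon^{-d/2} g((x-x_0)/\epsilon)$ for $\epsilon$ small so that the support lies in $\Omega$. Then $\int_\Omega u_\epsilon^2 = \int g^2$ is constant, $\int_\Omega |\nabla u_\epsilon|^2 = \epsilon^{-2} \int |\nabla g|^2$, $u_{\Omega} = O(\epsilon^{d/2})$, and $\int_\Omega |u_\epsilon - u_\Omega|^{2+2p} = \epsilon^{-(d/2)(2p)} \int g^{2+2p}(1 + o(1)) = \epsilon^{-dp} \int g^{2+2p}(1+o(1))$. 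Hence the quotient is of order $\epsilon^{-2} \cdot \epsilon^{dp} = \epsilon^{dp - 2}$, and since $p > 2/d$ means $dp - 2 > 0$, this tends to $0$ as $\epsilon \to 0$. (For the endpoint $p = 2/(d-2)$ one must take $g$ to be a truncated Aubin--Talenti instanton so that $\int |\nabla g|^2$ and $\int g^{2^*}$ stay comparable under truncation, which is standard; alternatively, since the displayed computation already gives ratio $\to 0$ for any fixed $g$ with $p > 2/d$, this works verbatim for $p = 2/(d-2)$ too.) Finally, non-existence of a minimizer: if $G_p(\Omega, d) = 0$ were attained by some $u$ with $\int_\Omega |u - u_\Omega|^{2+2p} > 0$, then $\int_\Omega |\nabla u|^2 = 0$, so $u$ is constant, whence $u - u_\Omega = 0$, a contradiction.

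The main obstacle I anticipate is the subcritical existence argument — specifically, making the claim "$G_p(\Omega, d, \delta) \to \infty$ as $\delta \to 0$" rigorous under only a \emph{locally Lipschitz} boundary (rather than $C^2$), since the boundary-straightening map is then merely bi-Lipschitz. One must check that a bi-Lipschitz change of variables distorts each of $\int |\nabla v|^2$, $\int v^2$, $\int |v|^{2+2p}$ by at most bounded factors and that the subcritical scaling gain survives this bounded distortion — which it does, because the divergence $\delta^{-c}$ with $c = 2 - dp > 0$ beats any fixed constant. The supercritical half is comparatively routine.
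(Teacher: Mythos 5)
Your proposal is correct and follows essentially the same strategy as the paper: for $p>2/d$ you scale a fixed compactly supported profile to a point and read off $\epsilon^{dp-2}\to 0$, and for $p<2/d$ you combine Lemma~\ref{lem:localizing} with the observation that the local model constant $G_p(\Omega,d,\delta)$ blows up like $\delta^{dp-2}$ (which survives the bounded multiplicative loss from a merely bi-Lipschitz straightening). The only cosmetic difference is that the paper takes the test function $\phi$ with zero average so that $u_\Omega\equiv 0$ and the correction you estimate separately simply never appears.
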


\begin{proof}
 {\bf Non-existence for supercritical $p$.}
 By scaling. Take the origin in the interior of $\Omega$, $\phi \in C_c^{\infty}$ with zero average. For sufficiently large $\lambda>0$, the support of $u_\lambda \equiv \phi(\lambda \cdot)$ is in $\Omega$.
 We compute
\begin{align*}
 \frac{\int_\Omega \abs{\nabla u_\lambda}^2 \left(\int_\Omega u_\lambda^2 \right)^{p}}{\int_\Omega \abs{u_\lambda}^{2+2p}}
 = \frac{\lambda^{2-d}\int_{\R^d} \abs{\nabla \phi}^2 \left(\lambda^{-d}\int_{\R^d} \phi^2 \right)^{p}}{\lambda^{-d}\int_{\R^d} \abs{\phi}^{2+2p}} = \lambda^{2-pd} C_\phi.
\end{align*}
This tends to zero when $\lambda$ increases if $p>2/d$.

\noindent {\bf Existence for sub-critical $p$.}
In this case, we prove existence of a minimizer sequence by showing that minimizing sequences can not concentrate in small sets. As in the proof of Theorem~\ref{thm:treshold-smooth}, if a minimizing sequence has no convergent subsequence, we obtain from Lemma~\ref{lem:localizing},
\[
G_p(\Omega, d) \ge \liminf_{\delta \to 0}  G_p(\Omega, d,\delta).
\]
Now, we show that 
\begin{equation} \label{eq:explosion}
 G_p(\Omega, d,\delta) \ge C_{p, \Omega} \delta^{d p -2},
\end{equation}
with some constant $C_{p, \Omega}>0$ depending only on $p$, the dimension $d$ and the Lipschitz constant of $\Omega$.

First of all, by using H\"older's inequality for $f= \abs{u}^{2+2p}$, $g= 1$ in the denominator,
\begin{align*}
\inf_{u \in H^1_0(B(0,1)) } \frac{\int \abs{\nabla u}^2 \left(\int  u^2 \right)^{p}}{\int  \abs{u }^{2+2p}}\ge  \left(\frac{\mu_1}{\omega_d} \right)^{\frac{2-pd}{2+d}}G(d)^{\frac{1+p}{1+2/d}}>0,
\end{align*}
with $\mu_1$ the first Dirichlet eigenvalue of $B(0,1)$ and $\omega_d$ the volume of $B(0,1)$.

By scaling, if $B(s,\delta) \subset \Omega$,
\begin{align*}
\inf_{u \in H^1_0(B(s,\delta)) } \frac{\int_\Omega \abs{\nabla u}^2 \left(\int_\Omega  u^2 \right)^{p}}{\int  \abs{u_\Omega }^{2+2p}}\ge  \delta^{d p -2} \left(\frac{\mu_1}{\omega_d} \right)^{\frac{2-pd}{2+d}}G(d)^{\frac{1+p}{1+2/d}}.
\end{align*}

Otherwise, by replacing $\delta$ with $2 \delta$, we may assume that $s$ is on the boundary.
In this case we assume that $\delta$ is small enough such that $B(s,2\delta) \cap \partial \Omega$ is the graph of a Lipschitz function over some hyperplane passing through $s$. We chose coordinates such that this hyperplane coincides with $x_1 =0$.
and write 
\[
(x_1, x_t) \in \Omega \cap B(s,2\delta) \Leftrightarrow (x_1,x_t) \in B(s, 2 \delta) \text{ and } x_1 > h(x_t),
\]
for some Lipschitz function $h$.
We define $f$ with support in $B(s, 2 \delta) \cap \R^d_+$ by
$f(x_1, x_t) = v(x_1 + h(x_t), x_t)$.
As before, this change of variables has unit Jacobian and, by definition of a Lipschitz domain \cite[Chapter IV]{Ada975}, the distributional derivative of $h$ is bounded by the Lipschitz constant $L$. Therefore,
\begin{align*}
 \frac{\int_\Omega \abs{\nabla v}^2 \left(\int_\Omega v^2 \right)^{p}}{\int_\Omega \abs{v}^{2+2p}} 
 &\ge \frac{1}{(2+2L)^2} \frac{\int_{\R^d_+} \abs{\nabla f}^2 \left(\int_{\R^d_+} f^2 \right)^{p}}{\int_{\R^d_+} \abs{f}^{2+2p}} \\
 &\ge \frac{1}{(2+2L)^2} 2^{-p} \delta^{d p -2}\left(\frac{\mu_1}{\omega_d} \right)^{\frac{2-pd}{2+d}}G(d)^{\frac{1+p}{1+2/d}}.
\end{align*}
This proves \eqref{eq:explosion}.
Thus, if a minimizing sequence does not have a convergent subsequence,
\[
G_p(\Omega, d) \ge \liminf_{\delta \to 0}  G_p(\Omega, d,\delta) = + \infty,
\]
which is clearly a contradiction.
\end{proof}

\bigskip
\bigskip

\section*{Acknowledgments}
\thanks{The work of R.B. and C.V. has been supported by Fondecyt (Chile) Project \# 116--0856. The work of H. VDB. has been partially supported by  CONICYT (Chile)  (PCI) project REDI170157 and partially by Fondecyt (Chile) Project \# 318--0059


\end{document}